\definecolor{BrewerRed}{RGB}{228,26,28}
\newcommand\myshade{85}
\colorlet{mylinkcolor}{BrewerRed}
\colorlet{mycitecolor}{NavyBlue}
\colorlet{myurlcolor}{Aquamarine}
\let\emptyset\varnothing
\newcommand\independent{\protect\mathpalette{\protect\independenT}{\perp}}
\def\independenT#1#2{\mathrel{\rlap{$#1#2$}\mkern2mu{#1#2}}}
\newcommand{\subalign}[1]{%
  \vcenter{%
    \Let@ \restore@math@cr \default@tag
    \baselineskip\fontdimen10 \scriptfont\tw@
    \advance\baselineskip\fontdimen12 \scriptfont\tw@
    \lineskip\thr@@\fontdimen8 \scriptfont\thr@@
    \lineskiplimit\lineskip
    \ialign{\hfil$\m@th\scriptstyle##$&$\m@th\scriptstyle{}##$\hfil\crcr
      #1\crcr
    }%
  }%
}
\newcommand{\Corder}{\preceq_{\text{c}}}
\newcommand\PI[1]{\ensuremath{S_{\partial}^{#1}}\xspace}
\newcommand{\bs}[1]{\boldsymbol{#1}}
\newcommand\bX{\ensuremath{\bm X}\xspace}
\definecolor{darkgreen}{RGB}{50,200,8}
\renewcommand{\bs}[1]{\boldsymbol{#1}}
\newcommand\figsubref[2]{\hyperref[#1]{\ref*{#1}#2}}
\newtheorem{definition}{Definition}
\newtheorem{theorem}{Theorem}
\newtheorem{lemma}{Lemma}
\newtheorem{proposition}{Proposition}
\newtheorem{example}{Example}
\pgfplotsset{compat = newest}
\definecolor{BrewerGreen}{RGB}{77,175,74}
\definecolor{BrewerBlue}{RGB}{55,126,184}
\definecolor{BrewerPurple}{RGB}{152,78,163}
\definecolor{BrewerOrange}{RGB}{255,127,0}
\definecolor{BrewerYellow}{RGB}{215,215,41}
\newcommand{\includetikz}[2]{\includegraphics{#2}}
\newcommand{\ra}[1]{\renewcommand{\arraystretch}{#1}}
\pgfplotsset{grid style={dashed,gray!40}}
\pgfplotsset{
  /pgfplots/scatter legend/.style={
    /pgfplots/legend image code/.code={\draw[##1,mark size=0.9pt,yshift=-0.1em, xshift=7pt] plot coordinates {
      (-0.1em, 0.1em)
      (0.2em, 0.45em)
      (0.3em, -0.05em)
      (0.6em, 0.3em)
    };},
    % rounded corners=1.5pt,
  },
}
\definecolor{lightblue}{RGB}{166,206,227}
\definecolor{darkblue}{RGB}{31 ,120,180}
\definecolor{lightred}{RGB}{251,154,153}
\definecolor{darkred}{RGB}{227,26,28}
\begin{document}

\title{An operational information decomposition via synergistic disclosure}
\author{Fernando Rosas} 
\thanks{F.R. and P.M. contributed equally to this work.\\E-mail: f.rosas@imperial.ac.uk, pam83@cam.ac.uk}
\affiliation{Data Science Institute, Imperial College London, London SW7 2AZ}
\affiliation{Center for Psychedelic Research, Department of Medicine, Imperial College London, London SW7 2DD}
\affiliation{Center for Complexity Science, Imperial College London, London SW7 2AZ}

\author{Pedro Mediano}
\thanks{F.R. and P.M. contributed equally to this work.\\E-mail: f.rosas@imperial.ac.uk, pam83@cam.ac.uk}
\affiliation{Department of Psychology, University of Cambridge, Cambridge CB2 3EB}

\author{Borzoo Rassouli}
\affiliation{School of Computer Science and Electronic Engineering, University of Essex, Colchester CO4 3SQ}

\author{Adam Barrett}
\affiliation{Sackler Center for Consciousness Science, Department of Informatics, University of Sussex, Brighton BN1 9RH}

%\date{\today}

\begin{abstract}

Multivariate information decompositions hold promise to yield insight into
complex systems, and stand out for their ability to identify synergistic phenomena. 
However, the adoption of these approaches has been hindered by there being
multiple possible decompositions, and no precise guidance for preferring one over
the others. 
At the heart of this disagreement lies the absence of a clear operational 
interpretation of what synergistic information is. 
Here we fill this gap by proposing a new information decomposition 
based on a novel operationalisation of informational synergy, which
leverages recent developments in the literature of data privacy. 
Our decomposition is defined for any number of information sources, and its 
atoms can be calculated using elementary optimisation techniques. 
The decomposition provides a natural coarse-graining that scales
gracefully with the system's size, and is applicable in a wide range of
scenarios of practical interest. 

\end{abstract}

\maketitle

%%%%%%%%%%%%%%%%%%%%%%%%%%%%%%%%%%%%%%%%%%%%%%%%%%%%%%%%
%%%%%%%%%%%%%%%%%%%%%%%%%%%%%%%%%%%%%%%%%%%%%%%%%%%%%%%%
\section{Introduction}

The familiarity with which we relate to the notion of ``information'' -- due to its central role in our modern worldview -- is at odds with the mysteries still surrounding some of its fundamental properties. 
One such mystery is the nature and role of \textit{synergistic information}, which is present in systems that exhibit global interdependencies that are not traceable from any of their subsystems. Synergistic
relationships have shown to be instrumental in a wide range of systems, including the nervous system~\cite{ganmor2011sparse,wibral2017quantifying}, artificial
neural networks~\cite{tax2017partial}, cellular
automata~\cite{rosas2018selforg}, and music
scores~\cite{rosas2019quantifying}. 
Furthermore, the concept of synergy traces a particularly
promising road to formalise the notion of ``the whole being greater than the sum of
the parts'', one of the long-standing aims of complexity science~\cite{waldrop1993complexity}.

Informational synergy has been studied following various approaches, 
including redundancy-synergy balances~\cite{chechik2002group,varadan2006computational,barrett2015exploration,rosas2019quantifying}, information geometry~\cite{amari2001information,schneidman2003network}, and others. Within this literature, one of the most elegant and powerful proposals is the \emph{Partial Information Decomposition} (PID) framework~\cite{williams2010nonnegative}, which divides information into \emph{redundant} (contained in every part of the system), \emph{unique} (contained in only one part), and \emph{synergistic} (contained in the whole, but not in any part) components. One peculiarity of the PID framework is the absence of precise prescriptions about how synergy should be quantified~\footnote{In effect, PID merely states formal relationships between its atoms and various Shannon's mutual information terms.}; and despite numerous efforts, an agreed-upon measure of synergy remains elusive~\cite{ince2017measuring,bertschinger2014quantifying,finn2018pointwise,james2019unique}. Most approaches to quantify synergy proceed by postulating axioms encoding some ``intuitive'' desiderata, 
which should ideally lead towards a unique measure -- following the well-known axiomatic derivation of Shannon's entropy~\cite{thurner2018introduction}. 
Unfortunately, a number of critical incompatibilities between some of these axioms have been reported~\cite{rauh2014reconsidering,kolchinsky2019novel}, 
which reveals the limitations of our intuition as a guide 
within the counterintuitive realms of high-order statistics.

Building on these remarks, we argue that measures of synergy with little concrete, 
operational meaning provide a limited advance from mere qualitative criteria. 
Moreover, as argued by Kolchinsky~\cite{kolchinsky2019novel}, there might exist 
not a single but multiple reasonable definitions of synergy, and hence
it is crucial to clarify what each proposed measure is capturing~\cite{feldman1998measures}. 
There have been a few attempts to formulate operational measures of redundant~\cite{ince2017measuring,finn2018pointwise} and unique information~\cite{bertschinger2014quantifying,banerjee2018computing}, but these
efforts are still in progress, and apply only indirectly to synergy~\footnote{The direct implications of these approaches are about the unique information, and apply to the synergy only via additional equalities with mutual information terms.}. Providing a clear operational meaning for synergy is, to the best of our knowledge, an important unresolved challenge.

In this paper we 
put forward a \emph{synergy-centered information decomposition}, rooted on the notion of \emph{synergistic data disclosure} from the literature of data privacy~\cite{rassouli2018latent,Rassouli2019} and synergistic variables introduced in Ref.~\cite{quax2017quantifying}. In this decomposition, synergy corresponds to the information that can be disclosed about a system without revealing the state of any of its parts. This measure is efficiently computable and is, to the best of our knowledge, the first to provide a direct operational interpretation for synergistic information. 
Moreover, our proposed decomposition is 
applicable to any number of source variables, 
and its operational meaning provides 
natural coarse-grainings that 
enable 
useful tools for 
practical analysis.

The paper is structured as follows. First, Section~\ref{sec2}
introduces our operational definition of synergy, and
Section~\ref{sec:decomp} uses it to build our proposed decomposition. The
decomposition's coarse-graining is discussed in Section~\ref{sec:backbone}, and
the special case of \emph{self-synergy} in
Section~\ref{sec:selfsyn}. Finally, the relationship with
other decompositions is studied in Section~\ref{sec:pid}.

%%%%%%%%%%%%%%%%%%%%%%%%%%%%%%%%%%%%%%%%%%%%%%%%%%%%%%%%
%%%%%%%%%%%%%%%%%%%%%%%%%%%%%%%%%%%%%%%%%%%%%%%%%%%%%%%%
\section{Synergy and data disclosure}\label{sec2}

Our goal is to develop a method to decompose the information that a multivariate system $\bX\coloneqq (X_1,\dots,X_n)$ provides about a target variable
$Y$, as quantified by Shannon's mutual information $I(\bX; Y)$. 
Our approach consists of three steps:
\begin{enumerate}

  \item Introduce \emph{synergistic channels}, which convey information about \bX
  but not about any of its parts (Sec.~\ref{sec:synchan});

  \item Define \emph{synergistic disclosure} as the maximum amount of information 
  about $Y$ that can be obtained through a synergistic channel on \bX
  (Sec.~\ref{sec:fundamental}); and

  \item Build an information decomposition by computing the synergistic disclosure
  for every node of a lattice, and using the M\"obius inversion formula
  (Sec.~\ref{sec:decomp}).

\end{enumerate}

The rest of this section provides technical details about synergistic disclosure, building upon the work recently reported in
Refs.~\cite{rassouli2018latent,Rassouli2019}.

%%%%%%%%%%%%%%%%%%%%%%%%%%%%%%%%%%%%%%%%%%%%%%%%%%%%%%%%
\subsection{Synergistic channels}
\label{sec:synchan}

Consider a system described by $n$ variables, $\bm X\coloneqq
(X_1,\dots,X_n)$, where each $X_k$ takes values on a discrete alphabet
$\mathcal{X}_k$ of cardinality $|\mathcal{X}_k|$. Consider also 
a channel that is applied on $\bm X$ to generate a
scalar observable $V$, which is characterised by a conditional 
distribution $p_{V|\bm X}$. We are interested in a particular class of observables, 
which carry information about $\bm X$ while revealing no information about 
specific subsystems.

Subsystems of $\bm X$ can be represented by sets of indices of the form
$\alpha=\{n_1,\dots,n_k\} \subset [n]$, with $[n] \coloneqq \{1,\dots,n\}$ being a shorthand notation, and the corresponding subsystem being denoted by 
$\bm X^\alpha = (X_{n_1},\dots,X_{n_k})$. 
We consider collections of subsystems, which are represented 
by \emph{source-sets} of the form $\bm \alpha = \{\alpha_1,\dots,\alpha_L\}$, 
where $\alpha_j\subset [n]$ for all $i=1,\dots,L$. For example, 
possible source-sets for $n=2$ are $\{\emptyset\}$, $\{\{1\}\}$, $\{\{1\},\{1,2\}\}$, etc. 
With the notion of source-set in hand, we can formally define synergistic
channels as follows:
\begin{definition}
A channel $p_{V| \bm X}$ is $\bm \alpha$-synergistic for $\bm
\alpha=\{\alpha_1,\dots,\alpha_L\}$, if $V\independent \bm X^{\alpha_i}$,
$\forall i = 1, \dots, L$. The set of all $\bm\alpha$-synergistic channels is
denoted by
\begin{equation}\label{def_set}
\mathcal{C}(\bm X; \bm \alpha) = \bigg\{ p_{V|\bm X} \:\bigg| \:V\independent \bm X^{\alpha_i},\forall i\in[L]\bigg\}.
\end{equation}
A variable $V$ generated via an $\bm\alpha$-synergistic channel is said to be
an $\bm\alpha$-synergistic observable.
\end{definition}
Due to the independence constraints, an $\bm\alpha$-synergistic observable $V$
satisfies $I(\bX^{\alpha_i}; V) = 0$ for all $i=1,\dots,L$. Thus the name synergistic: by construction, 
an $\bm\alpha$-synergistic observable $V$ might convey information about the whole, $\bX$, 
while disclosing no information about the corresponding parts $\bm X^{\alpha_1},\dots,\bm X^{\alpha_L}$.

\begin{example}
If $\bX=(X_1,X_2)$ are two independent fair coins, then the observable $V= X_1\textnormal{\texttt{xor}} X_2$
given by
\begin{equation}
X_1\textnormal{\texttt{xor}} X_2 \coloneqq
\begin{cases}
0 &\text{if } X_1=X_2 \\
1 &\text{if } X_1\neq X_2~.
\end{cases}
\end{equation}
is $\bm\alpha$-synergistic for $\bm\alpha = \{\{1\},\{2\}\}$.
\end{example}

Note that $p_{V|\bm X}$ can be depicted as a rectangular matrix. Elegant algebraic methods for characterising synergistic channels based on this matrix representation are available, and are discussed in Appendix~\ref{app:channel_description}.

%%%%%%%%%%%%%%%%%%%%%%%%%%%%%%%%%%%%%%%%%%%%%%%%%%%%%%%%
\subsection{Fundamental properties of synergistic disclosure}
\label{sec:fundamental}

Now that synergistic channels have been defined, let us 
formulate our measure of synergistic disclosure. To do this, consider 
a target variable $Y$, potentially having some dependence on \bX according to a given 
joint distribution $p_{\bm X,Y}$. We
are interested in quantifying to what extent the collective properties of $\bm
X$ can predict $Y$ without revealing any information about the subsystems
$\bX^{\alpha_1},\dots,\bX^{\alpha_L}$. This intuition can be naturally operationalised by the
mutual information between $Y$ and the $\bm\alpha$-synergistic observables of $\bm X$, as
described in the next definition.
\begin{definition}\label{def:syndisc}
The $\bm \alpha$-synergy between sources $\bm X$ and target $Y$ is defined as
\begin{equation}\label{eq:def_syn}
S^{\bm \alpha}(\bX \rightarrow Y)\coloneqq \sup_{\substack{p_{V|\bm X}\in \mathcal{C}(\bm X;\bm\alpha):\\V-\bm X-Y}} I(V;Y)~.
\end{equation} 
\end{definition}
\noindent Above, the supremum is calculated over all the $\bm\alpha$-synergistic channels $p_{V|\bm X}$, so that the joint distribution of $(V,Y)$ over which $I(V;Y)$ is calculated is of the form 
\begin{equation}
    \mathbb{P}\{V=v,Y=y\} = \sum_{\bm x\in \prod_{i=1}^n \mathcal{X}_i}p_{V|\bm X}(v|\bm x) p_{\bm X,Y}(\bm x,y)~.
    \nonumber
\end{equation}
Additionally, it can be verified that the supremum in \eqref{eq:def_syn} is attained, and hence, it is a maximum~\footnote{To verify this, first one shows that it suffices to have a random variable $V$ with a finite alphabet by means of cardinality bounding techniques. Then, using the fact that any finite probability simplex is a compact set, the supremum in (\ref{eq:def_syn}) has to be attained due to the continuity of the mutual information.}. Finally, this definition can used to extend the notion of synergy over any $f$-information, as discussed in Appendix~\ref{app:finfo}.

Please note that Definition~\ref{def:syndisc} has a concrete operational
interpretation : $S^{\bm\alpha}(\bX \rightarrow Y)$ \textit{represents the amount of information about
$Y$ that can be disclosed from $\bX$ while revealing no information about
$\bX^{\bm\alpha_1},\dots,\bX^{\bm\alpha_L}$}. This operationalisation has its origins in the
context of data privacy scenarios, as discussed in Refs.~\cite{rassouli2018latent,Rassouli2019}. 
Please note that this strongly contrasts with previous approaches to information decomposition,
which have proceeded by writing down an axiomatic base and then formulating a
measure consistent with those axioms. 
As a matter of fact, most problems in information theory are operational in 
nature~\footnote{Shannon himself employed the known formula for entropy because
it had a strict operational meaning as minimum description length, not because
it was the result of the four celebrated axioms.}, and hence one could argue
that this approach lies closer to Shannon's original contribution.

Let us explore some basic properties of our measure of synergy,
$S^{\bm\alpha}$. A first fortunate feature is that this quantity is computable via
simple optimisation techniques, which is a direct extension of Ref.~\cite[Theorem~1]{rassouli2018latent}.
\begin{theorem}\label{teo1}
  The supremum in Eq.~\eqref{eq:def_syn} is always attained, and the corresponding 
synergistic channel can be obtained as the solution to a standard linear-programming problem.
\end{theorem}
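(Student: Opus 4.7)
The plan is to reformulate the optimisation as a finite-dimensional problem on a polytope, following the approach of Ref.~\cite{rassouli2018latent}. The key observation is that a channel $p_{V|\bm X}$ satisfies $V \independent \bm X^{\alpha_i}$ for all $i \in [L]$ if and only if every conditional distribution $p_{\bm X|V=v}$ lies in the polytope
\begin{equation}
\mathcal{P}(\bm\alpha) \coloneqq \bigl\{ q_{\bm X} \in \Delta_{\bm X} : q_{\bm X^{\alpha_i}} = p_{\bm X^{\alpha_i}},\ \forall i \in [L] \bigr\},
\end{equation}
where $\Delta_{\bm X}$ denotes the simplex over $\prod_i \mathcal{X}_i$. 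Each $\bm\alpha$-synergistic channel thus corresponds to a convex decomposition $p_{\bm X} = \sum_v p_V(v)\, p_{\bm X|V=v}$ whose components all lie in $\mathcal{P}(\bm\alpha)$, and conversely every such decomposition determines a valid channel (with the Markov condition $V-\bm X-Y$ automatic by construction).

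I would then exploit the concavity of entropy to restrict the search to the vertices of $\mathcal{P}(\bm\alpha)$. Since $I(V;Y) = H(Y) - \sum_v p_V(v) H(Y|V=v)$ and the map $p_{\bm X|V=v} \mapsto p_{Y|V=v}$ is linear, the function $q \mapsto H(Y \mid \bm X \sim q)$ is concave on $\mathcal{P}(\bm\alpha)$. Consequently, whenever some $p_{\bm X|V=v}$ is not extreme, splitting it into a convex combination of vertices yields a new $\bm\alpha$-synergistic channel whose $I(V;Y)$ is no smaller than the original. Since $\mathcal{P}(\bm\alpha)$ is a bounded intersection of finitely many hyperplanes with a simplex, it has finitely many vertices $r^{(1)},\dots,r^{(M)}$, and we may restrict attention to channels supported on these.

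Once the vertices are enumerated, the optimisation reduces to
\begin{equation}
\min_{\lambda \in \mathbb{R}^{M}_{\geq 0}}\ \sum_{k=1}^{M} \lambda_k\, c_k \qquad \text{s.t.} \qquad \sum_{k=1}^{M} \lambda_k\, r^{(k)} = p_{\bm X},
\end{equation}
with $c_k \coloneqq H(Y \mid \bm X \sim r^{(k)})$ precomputable constants, which is a standard linear program (the normalisation $\sum_k \lambda_k = 1$ is implied by summing the equality constraint over $\bm X$). The LP is feasible, since $p_{\bm X} \in \mathcal{P}(\bm\alpha)$ itself admits a vertex decomposition by Carath\'eodory's theorem, and the objective is bounded below by zero; hence an optimum is attained, establishing both the attainment of the supremum in \eqref{eq:def_syn} and the LP characterisation. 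The step I expect to require the most care is the rigorous verification of the polytope characterisation of $\mathcal{C}(\bm X;\bm\alpha)$---namely, that the independence conditions $V \independent \bm X^{\alpha_i}$ precisely match the marginal constraints defining $\mathcal{P}(\bm\alpha)$---after which the vertex reduction and the LP structure follow from standard convex and polyhedral analysis.
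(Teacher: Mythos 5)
Your proof is correct and takes essentially the route the paper itself relies on: the paper omits the argument and defers to Ref.~\cite[Section~3]{Rassouli2019} (extending Ref.~\cite[Theorem~1]{rassouli2018latent}), where attainment and the LP formulation are obtained exactly as you do, by restricting to decompositions of $p_{\bm X}$ into vertices of the marginal-matching polytope and minimising the induced linear objective in the weights. Note also that your polytope characterisation of $\mathcal{C}(\bm X;\bm\alpha)$ is equivalent to the null-space condition stated in Lemma~\ref{lemma:markov} of Appendix~\ref{app:channel_description}.
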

While the proof of Theorem~\ref{teo1} is omitted, interested readers can find the corresponding details in Ref.~\cite[Section~3]{Rassouli2019}. Additionally, software alternatives to compute $S^\alpha$ are discussed in Section~\ref{sec:conclusion}.

Despite the guarantees provided by Theorem~\ref{teo1}, it is useful to have simple bounds. 
Note that, due to the data processing inequality,
$S^{\bm\alpha}$ satisfies $S^{\bm\alpha}(\bm X\rightarrow Y) \leq I(\bm X;Y)$
for all $\bm \alpha$. The following result introduces a less trivial upper
bound.

\begin{proposition}\label{pr:upper_bound}
The following upper bound holds for $S^{\bm\alpha}$:
\begin{equation}\label{uppbound}
S^{\bm\alpha }(\bm X \rightarrow Y) \leq \min_{j\in \{1,\dots,L\}} I(Y; \bm X^{-\alpha_j}|\bm X^{\alpha_j}) ,
\end{equation}
where $\bm X^{-\alpha_j} \triangleq \{X_1,\ldots,X_n\}\backslash
\{X_{n_1},\dots,X_{n_k}\}$ with $\alpha_j=\{n_1,\dots,n_k\}$.
\end{proposition}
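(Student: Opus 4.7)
The plan is to bound $I(V;Y)$ for an arbitrary admissible $V$ and then take the supremum. Fix any index $j \in \{1,\dots,L\}$ and any channel $p_{V|\bm X} \in \mathcal{C}(\bm X;\bm\alpha)$ satisfying the Markov chain $V - \bm X - Y$. I would carry out the bound in two elementary steps, each exploiting one of the two structural constraints on $V$: the synergy constraint $V \independent \bm X^{\alpha_j}$, and the Markov chain from $V$ through $\bm X$ to $Y$.

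First I would use the synergy constraint to pass from the unconditional to the conditional mutual information. Since $V\independent \bm X^{\alpha_j}$ implies $I(V;\bm X^{\alpha_j})=0$, the chain rule gives
\begin{equation*}
I(V;Y,\bm X^{\alpha_j}) = I(V;\bm X^{\alpha_j}) + I(V;Y\mid \bm X^{\alpha_j}) = I(V;Y\mid \bm X^{\alpha_j}),
\end{equation*}
while expanding the same quantity the other way,
\begin{equation*}
I(V;Y,\bm X^{\alpha_j}) = I(V;Y) + I(V;\bm X^{\alpha_j}\mid Y) \geq I(V;Y),
\end{equation*}
so that $I(V;Y) \leq I(V;Y\mid \bm X^{\alpha_j})$.

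Second I would apply a conditional data-processing inequality. Because $\bm X = (\bm X^{\alpha_j},\bm X^{-\alpha_j})$, the assumption $V \independent Y \mid \bm X$ is equivalent to $V \independent Y \mid \bm X^{\alpha_j},\bm X^{-\alpha_j}$, which is exactly the statement that, conditioned on $\bm X^{\alpha_j}$, the variables form the Markov chain $V - \bm X^{-\alpha_j} - Y$. The conditional DPI then yields
\begin{equation*}
I(V;Y\mid \bm X^{\alpha_j}) \leq I(\bm X^{-\alpha_j}; Y \mid \bm X^{\alpha_j}).
\end{equation*}
Chaining the two inequalities gives $I(V;Y)\leq I(Y;\bm X^{-\alpha_j}\mid \bm X^{\alpha_j})$. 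Since this holds for every admissible $V$ (so one may take the supremum appearing in Definition~\ref{def:syndisc}) and for every $j$, the minimum over $j$ on the right-hand side delivers the claim.

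The argument is short and the only real ``choice'' is spotting that the synergy constraint lets one insert $\bm X^{\alpha_j}$ as a conditioning variable for free; the main obstacle one has to avoid is the temptation to apply the unconditional DPI directly to $V-\bm X-Y$, which would only reproduce the trivial bound $I(V;Y)\leq I(\bm X;Y)$. Using the constraint $V\independent \bm X^{\alpha_j}$ to first ``remove'' the shared part $\bm X^{\alpha_j}$ before invoking DPI is the key idea.
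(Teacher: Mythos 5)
Your proof is correct and is essentially the paper's argument in a cleaner packaging: both rely on exactly the two facts $I(V;\bm X^{\alpha_j})=0$ and the Markov chain $V-\bm X-Y$, and the slack terms you discard ($I(V;\bm X^{\alpha_j}\mid Y)$ in your first step and $I(Y;\bm X^{-\alpha_j}\mid \bm X^{\alpha_j},V)$ inside the conditional DPI) are precisely the two terms dropped in the paper's chain of inequalities, which instead routes through the identity $I(Y;V)=I(Y;\bm X)-I(Y;\bm X\mid V)$. No gap; taking the supremum over admissible $V$ and the minimum over $j$ finishes the claim exactly as you say.
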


\begin{proof}
  See Appendix~\ref{app:upper_bound}.
\end{proof}

The above property sometimes provide a shortcut to calculate $S^{\bm\alpha}$, as if
one finds a particular synergistic observable that attains this upper bound
then it is clear that it is maximal. One immediate consequence of this
Proposition, noting that $I(\bm X^{-\alpha_j}; Y|\bm X^{\alpha_j}) = I(\bm X;
Y) - I(\bm X^{\alpha_j}; Y)$, is that
\begin{equation}
  I(\bm X; Y) - S^{\bm\alpha }(\bm X \rightarrow Y) \geq \max_{j\in\{1,\dots,n\}} I(\bm X^{\alpha_j}; Y)~.
\end{equation}
In other words, the amount of non-synergistic information is lower-bounded by
the amount of information carried by the most strongly correlated subgroup.

Further details on $S^{\bm\alpha}$, including properties of its bounds,
algebraic properties, and a data processing inequality, are presented in
Section~\ref{sec:axioms} and Appendix~\ref{app:further_props}.

%%%%%%%%%%%%%%%%%%%%%%%%%%%%%%%%%%%%%%%%%%%%%%%%%%%%%%%%
%%%%%%%%%%%%%%%%%%%%%%%%%%%%%%%%%%%%%%%%%%%%%%%%%%%%%%%%
\section{Information decomposition}
\label{sec:decomp}

This section uses the functional definition of $\bm\alpha$-synergy to
formulate our proposed information decomposition. For this, we focus on the study of
the sets of constraints of the form $\bm\alpha=\{\alpha_1,\dots,\alpha_L\}$,
which are the argument in the synergy $S^{\bm\alpha}(\bX\rightarrow Y)$. For
such sets, we say $|\bm\alpha|\coloneqq L$ is the cardinality of the set.

%%%%%%%%%%%%%%%%%%%%%%%%%%%%%%%%%%%%%%%%%%%%%%%%%%%%%%%%
\subsection{The extended constraint lattice}\label{sec:exteneded_lattice}

Let us start by observing that not all source-sets yield unique synergistic
channels. As a simple example, if $\bm\alpha = \{\{1,2\}\}$ and $\bm\beta = \{\{1\},\{1,2\}\}$ one has that $\mathcal{C}(\bm X;\bm\alpha) = \mathcal{C}(\bm X;\bm\beta)$,
as all the additional constraints in $\bm\beta$ are subsumed by the constraints in $\bm\alpha$.
More formally, we say that
two source-sets are equivalent, denoted by $\bm\alpha\equiv_I\bm\beta$, if
$\mathcal{C}(\bX; \bm\beta) = \mathcal{C}(\bX; \bm\alpha)$. 
Our next result shows that the set of anti-chains
\begin{equation} \label{eq:ordering}
\mathcal{A}^* =\{ \bm\alpha=\{\alpha_1,\dots,\alpha_L\}: \alpha_i \subset [n], \alpha_i\not\subset\alpha_j \forall i\neq j \}
\end{equation}
contains exactly one member of each equivalence class, and this member is the
simplest such source-set.
\begin{lemma}\label{lemma:antichains}
For any $\bm \beta = \{\beta_1,\dots,\beta_M\}$ with $\beta_i\subset [n]$,
there exists one and only one $\bm\alpha\in\mathcal{A}^*$ such that
$\bm\beta\equiv_I\bm\alpha$. Moreover, if $\bm\beta\equiv_I\bm\alpha$ and $\bm\alpha\in\mathcal{A}^*$, then
$|\bm\beta|\geq|\bm\alpha|$.
\end{lemma}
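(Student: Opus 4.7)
The plan rests on a simple monotonicity principle: whenever $\alpha \subset \alpha'$, the tuple $\bm X^{\alpha}$ is a sub-tuple of $\bm X^{\alpha'}$, so the constraint $V \independent \bm X^{\alpha'}$ automatically entails $V \independent \bm X^{\alpha}$ (marginal independence follows from joint independence). Consequently, for source-sets $\bm\alpha$ and $\bm\gamma$ one has $\mathcal{C}(\bm X;\bm\alpha \cup \bm\gamma) = \mathcal{C}(\bm X;\bm\alpha) \cap \mathcal{C}(\bm X;\bm\gamma)$, and any element of a source-set that is contained in another element of the same source-set is redundant.

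First I would establish existence by an explicit construction. Given an arbitrary $\bm\beta = \{\beta_1,\dots,\beta_M\}$, let $\bm\alpha$ be the collection of $\subset$-maximal elements of $\bm\beta$. By construction $\bm\alpha \in \mathcal{A}^*$. Since $\bm\alpha \subset \bm\beta$, the inclusion $\mathcal{C}(\bm X;\bm\beta) \subset \mathcal{C}(\bm X;\bm\alpha)$ is immediate (more constraints shrink the feasible set); conversely, every $\beta_i$ is contained in some $\alpha_k \in \bm\alpha$, so the monotonicity above gives the reverse inclusion. Hence $\bm\beta \equiv_I \bm\alpha$.

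The main obstacle is uniqueness. Assume two distinct antichains $\bm\alpha, \bm\alpha' \in \mathcal{A}^*$ satisfy $\mathcal{C}(\bm X;\bm\alpha) = \mathcal{C}(\bm X;\bm\alpha')$. Any antichain in a finite Boolean lattice is uniquely determined by its downward closure, so their downward closures must differ; WLOG, there exists $\alpha \in \bm\alpha$ such that $\alpha \not\subset \alpha'$ for every $\alpha' \in \bm\alpha'$. I would then need to exhibit a channel lying in $\mathcal{C}(\bm X;\bm\alpha')$ but violating $V \independent \bm X^{\alpha}$, thereby contradicting the assumed equality. The cleanest route is to invoke the algebraic characterisation of synergistic channels in Appendix~\ref{app:channel_description}: the independence constraints translate into linear conditions on the channel matrix, and when $\alpha$ is not dominated by any $\alpha' \in \bm\alpha'$ the corresponding linear constraint is not implied by the others, so the affine varieties of feasible channels have different dimensions and in particular differ as sets. (In the degenerate case where some marginals are essentially constant a mild non-degeneracy on $p_{\bm X}$ may be required; for a generic $\bm X$ with full support the combinatorial and distributional notions of equivalence coincide.)

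Finally, minimality follows at once from existence and uniqueness: given any $\bm\beta \equiv_I \bm\alpha$ with $\bm\alpha \in \mathcal{A}^*$, the antichain of maximal elements of $\bm\beta$ constructed above is equivalent to $\bm\beta$ and hence, by the just-established uniqueness, equal to $\bm\alpha$; since the maximal elements of $\bm\beta$ form a subset of $\bm\beta$, the inequality $|\bm\alpha| \leq |\bm\beta|$ is immediate.
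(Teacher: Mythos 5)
Your existence and minimality steps follow essentially the same route as the paper: the paper argues via Lemma~\ref{lemma:markov} that a constraint $\alpha'\subseteq\alpha_i$ only appends rows to $\mathbf{P}_{\bm\alpha}$ that are linear combinations of existing ones, leaving $\textnormal{Null}(\mathbf{P}_{\bm\alpha})$ (hence $\mathcal{C}(\bX;\bm\alpha)$) unchanged, which is the matrix-form of your monotonicity principle, and then reduces $\bm\beta$ to its maximal elements exactly as you do. So far your write-up is correct, and arguably cleaner than the paper's on the cardinality claim.

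The genuine gap is the uniqueness step, which is the heart of the ``one and only one'' assertion. You reduce it correctly to the right combinatorial statement (some $\alpha\in\bm\alpha$ with $\alpha\not\subseteq\alpha'$ for every $\alpha'\in\bm\alpha'$), but then the key claim --- that the linear constraint coming from $\alpha$ is ``not implied by the others,'' so that $\mathcal{C}(\bX;\bm\alpha')\neq\mathcal{C}(\bX;\bm\alpha)$ --- is asserted rather than proved, and it is exactly what needs work. A concrete way to close it: pick $w(\bm x)=\prod_{k\in\alpha}f_k(x_k)$ with each $f_k$ nonzero but summing to zero over $\mathcal{X}_k$ (constant factors on coordinates outside $\alpha$); then every $\bm\gamma$-marginal of $w$ with $\alpha\not\subseteq\gamma$ vanishes while the $\alpha$-marginal does not, and the two-output channel with $p_{\bX|V=0}=p_{\bX}+\epsilon w$, $p_{\bX|V=1}=p_{\bX}-\epsilon w$ lies in $\mathcal{C}(\bX;\bm\alpha')\setminus\mathcal{C}(\bX;\bm\alpha)$ for small $\epsilon$. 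Note, however, that this needs $p_{\bX}$ of full support and every $|\mathcal{X}_k|\geq 2$; your parenthetical hedge is therefore not optional --- if, say, $X_2$ is constant, then $\{\{1\}\}$ and $\{\{1\},\{2\}\}$ are distinct antichains with identical channel sets, so uniqueness (and with it your derivation of $|\bm\beta|\geq|\bm\alpha|$, which leans on uniqueness) fails as stated. To be fair, the paper's own proof never addresses uniqueness at all and silently assumes this non-degeneracy, so your sketch actually identifies the weak point of the lemma; it just does not yet fill it.
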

\begin{proof}
See Appendix~\ref{app:antichains}.
\end{proof}

In other words, considering collections of indices that are not anti-chains
would not provide new classes of channels, as broader subunits subsume smaller
ones. This property brings strong reminiscences of Williams and Beer's
redundancy lattice~\cite{williams2010nonnegative} -- which we will discuss in
detail in Section~\ref{sec:pid}~\footnote{In fact, we employ the notation $\mathcal{A}^*$ to
differentiate this set from with their definition of antichains set $\mathcal{A}$ that 
doesn't include the empty set -- which plays an important role in our framework.}.

In addition to the set of nodes, to build a lattice on which one can formulate
a decomposition one needs a partial order relationship. Considering our setup, 
a natural candidate is the order introduced by James \emph{et al.} in their 
proposed \emph{constraint lattice}~\cite{james2018unique}, defined by
\begin{equation}
\bm\alpha \Corder \bm \beta
\iff \forall \alpha \in \bm\alpha, ~ \exists \beta \in \bm\beta: \alpha \subseteq \beta 
\end{equation}
for $\bm\alpha,\bm\beta\in\mathcal{A}^*$. Intuitively, $\bm\alpha \Corder \bm \beta$ 
means that all the constraints 
imposed by
$\bm\alpha$ are included
within those imposed by
$\bm\beta$,
and therefore
$\mathcal{C}(\bX; \bm\beta) \subseteq \mathcal{C}(\bX; \bm\alpha)$.

Putting these structures together generates the \textit{extended constraint
lattice} $\mathcal{L}^* : = (\mathcal{A}^*, \Corder )$, which extends the
lattice introduced by James \emph{et al.}~\cite{james2018unique}, and has
been recently used by Ay \emph{et al.}~\cite{ay2019information}. The cases $n=2$ and $n=3$
are depicted in Fig.~\ref{fig:lattice}. Importantly, in constrast with James'
proposal,
$\mathcal{L}^*$
includes nodes that do not cover all the
sources. The resulting lattice is isomorphic in shape to Williams and Beer's
redundancy lattice, but with different relationships between the nodes. Despite
this similarity, however, comparisons between these two lattices are not
straightforward (c.f. Sec.~\ref{sec:pid}).

The lattice $\mathcal{L}^*$ possesses some interesting properties, most prominently:
\begin{lemma}\label{lemma:monotonic}
If $\bm\alpha, \bm\beta\in\mathcal{L}^*$ and $\bm\alpha \Corder \bm\beta$, then 
\begin{equation}
S^{\bm\alpha}(\bm
X\rightarrow Y) \geq S^{\bm\beta}(\bm X\rightarrow Y)~.
\end{equation}
\end{lemma}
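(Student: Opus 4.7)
The proof plan is essentially to reduce the monotonicity claim to a set-inclusion statement about the feasible sets of channels, combined with the fact that enlarging a feasible set can only increase a supremum.

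First I would establish the key set inclusion: if $\bm\alpha \Corder \bm\beta$, then $\mathcal{C}(\bm X; \bm\beta) \subseteq \mathcal{C}(\bm X; \bm\alpha)$. To see this, take any $p_{V|\bm X} \in \mathcal{C}(\bm X;\bm\beta)$, so that $V \independent \bm X^{\beta_j}$ for every $\beta_j \in \bm\beta$. Fix any $\alpha_i \in \bm\alpha$; by the definition of $\Corder$ there exists $\beta_j \in \bm\beta$ with $\alpha_i \subseteq \beta_j$. Hence $\bm X^{\alpha_i}$ is a sub-vector (a deterministic function) of $\bm X^{\beta_j}$, and the independence $V \independent \bm X^{\beta_j}$ immediately implies $V \independent \bm X^{\alpha_i}$. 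Since $\alpha_i$ was arbitrary, $p_{V|\bm X}$ satisfies all the $\bm\alpha$-constraints, i.e., it lies in $\mathcal{C}(\bm X;\bm\alpha)$. This is essentially the observation already flagged in the paragraph preceding the lemma.

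Second, I would invoke the definition of $S^{\bm\alpha}$ as a supremum of $I(V;Y)$ over the feasible set $\mathcal{C}(\bm X; \bm\alpha)$ (restricted additionally by the Markov chain $V - \bm X - Y$, a condition that depends only on the channel from $\bm X$ to $V$ and thus carries over unchanged). Both optimisations use the same objective functional $I(V;Y)$ determined by $p_{V|\bm X}$ and the fixed joint $p_{\bm X,Y}$, so any channel feasible for the $\bm\beta$-problem is also feasible for the $\bm\alpha$-problem. Supremising the same functional over a larger set can only yield a larger (or equal) value, giving
\begin{equation}
S^{\bm\beta}(\bm X \to Y) = \sup_{p_{V|\bm X} \in \mathcal{C}(\bm X; \bm\beta)} I(V;Y) \leq \sup_{p_{V|\bm X} \in \mathcal{C}(\bm X; \bm\alpha)} I(V;Y) = S^{\bm\alpha}(\bm X \to Y).
\end{equation}

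There is no real obstacle here — the lemma is essentially a bookkeeping statement. The only subtlety worth checking carefully is that the Markov-chain side condition $V - \bm X - Y$ in Definition~\ref{def:syndisc} is a property of $p_{V|\bm X}$ alone (since $Y$ is coupled to $V$ only through $\bm X$), so it is preserved by the inclusion and does not affect the comparison. Once this is noted, the argument is complete.
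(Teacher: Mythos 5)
Your proof is correct and follows essentially the same route as the paper's own argument in Appendix~\ref{app:Corder}: establish the inclusion $\mathcal{C}(\bm X;\bm\beta)\subseteq\mathcal{C}(\bm X;\bm\alpha)$ and conclude by comparing suprema over nested feasible sets. You simply spell out the inclusion (sub-vector independence) and the preservation of the Markov condition, details the paper leaves as ``direct to check.''
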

\begin{proof}
  See Appendix~\ref{app:Corder}.
\end{proof}

This result shows that $S^{\bm\alpha}(\bX\rightarrow Y)$ is a non-increasing function of $\bm\alpha\in\mathcal{L}^*$ 
for any given variables $\bX, Y$. 
With this, one can propose the following decomposition based on the M\"obius
inversion formula~\cite{charalambides2018enumerative}:

\begin{definition}
For a given $p_{\bm X,Y}$, the atoms $S_\partial^{\bm\alpha}(\bX\rightarrow Y)$ 
correspond to the terms given by the \textit{M\"obius inverse} of
$S^{\bm\alpha}(\bX\rightarrow Y)$; i.e. the unique set of values that satisfy
\begin{equation}\label{eq:syn_atoms}
S_\partial^{\bm\alpha}(\bm X\rightarrow Y) \coloneqq S^{\bm\alpha}(\bm X\rightarrow Y) - \sum_{\subalign{\bm\beta&\in\mathcal{A}^*\\ \bm\beta&\succ\bm
\alpha }} S_\partial^{\bm\beta}(\bm X\rightarrow Y)
\vspace{-.3cm}
\end{equation}
for all $\bm\alpha\in\mathcal{A}^*$.
\end{definition}

Intuitively, the M\"obius inversion can be understood as a discrete derivative
over a lattice. In effect, an equivalent representation of the M\"obius
relationship is given by
\vspace{-0.1cm}
\begin{equation}\label{eq:fund_calculus}
S^{\bm\alpha}(\bm X\rightarrow Y) = \sum_{\subalign{\bm\beta&\in\mathcal{A}^*\\ \bm\beta&\succeq\bm
\alpha}} S_\partial^{\bm\beta}(\bm X\rightarrow Y)~,
\vspace{-0.1cm}
\end{equation}
which is analogous to the fundamental theorem of calculus. The M\"obius
inversion yields \emph{synergy atoms} of the form $S_\partial^{\bm\alpha}$,
which quantify how much information about the target is contained in the
collective effects of variables $\bm\alpha$. For example, $S^{[n]}(\bX\rightarrow Y) =
S_\partial^{[n]}(\bX\rightarrow Y) = 0$, and $S^{\emptyset}(\bX\rightarrow Y)
=I(\bX;Y)$ for any
$p_{\bm X,Y}$~\footnote{
Strictly speaking, since $\bm\alpha$ is a set of sets, these
atoms should be denoted by $S^{\{\emptyset\}}_\partial$,
$S^{\{\{1\}\{2\}\}}_\partial$, etc. For clarity, and for consistency with prior literature, we omit the outer bracket and denote these symbols by shortened expressions (e.g. $S^{\emptyset}_\partial$, $S^{\{1\}\{2\}}_\partial$, etc.).}. 
This last indentity, combined with Eq.~\eqref{eq:fund_calculus},
gives the following important result:

\begin{proposition}[Information decomposition]\label{prop:decomp}
The mutual information between $\bX$ and $Y$ can be decomposed as
\begin{align}\label{eq:decompppp}
  I(\bX; Y) = \sum_{\bm\alpha\in\mathcal{A}^*} S_\partial^{\bm\alpha}(\bX \rightarrow Y)~.
\end{align}
\end{proposition}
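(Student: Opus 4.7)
The plan is to prove Proposition~\ref{prop:decomp} as an immediate consequence of the M\"obius representation in Eq.~\eqref{eq:fund_calculus} evaluated at the bottom of the lattice $\mathcal{L}^*$. The argument decomposes into three short steps.

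First, I would identify $\{\emptyset\}$ as the minimum of $\mathcal{L}^*$. The source-set $\{\emptyset\}$ belongs to $\mathcal{A}^*$ trivially, since the anti-chain condition in Eq.~\eqref{eq:ordering} is vacuous for a singleton. For any other $\bm\beta\in\mathcal{A}^*$, the unique element $\emptyset$ of $\{\emptyset\}$ satisfies $\emptyset\subseteq\beta$ for every $\beta\in\bm\beta$, so $\{\emptyset\}\Corder\bm\beta$. Hence $\{\emptyset\}$ is a global lower bound in $\mathcal{L}^*$.

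Second, I would verify that $S^{\emptyset}(\bX\rightarrow Y)=I(\bX;Y)$, which is stated in the paper but deserves a one-line justification. Because $\bm X^{\emptyset}$ is the empty-indexed subsystem, i.e.\ effectively a constant, the constraint $V\independent \bm X^{\emptyset}$ holds for every channel $p_{V|\bm X}$. Therefore $\mathcal{C}(\bX;\{\emptyset\})$ coincides with the set of all channels, and Definition~\ref{def:syndisc} reduces to the supremum of $I(V;Y)$ over all Markov chains $V-\bX-Y$. The data processing inequality upper-bounds this by $I(\bX;Y)$, and the bound is achieved by the identity channel $V=\bX$.

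Third, I would instantiate Eq.~\eqref{eq:fund_calculus} at $\bm\alpha=\{\emptyset\}$: by the first step, the indexing condition $\bm\beta\succeq\{\emptyset\}$ is satisfied by every $\bm\beta\in\mathcal{A}^*$, so the sum ranges over the whole anti-chain set; combined with the second step, the left-hand side equals $I(\bX;Y)$, yielding exactly Eq.~\eqref{eq:decompppp}.

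There is no real obstacle: the result is essentially a one-line corollary once $\{\emptyset\}$ is recognised as the global minimum of $\mathcal{L}^*$. The genuine technical content has already been absorbed into Lemmas~\ref{lemma:antichains} and~\ref{lemma:monotonic}, which guarantee that $\mathcal{A}^*$ is a well-defined finite poset on which $S^{\bm\alpha}$ is monotone and the M\"obius inversion in Eq.~\eqref{eq:syn_atoms} is unambiguously well-defined.
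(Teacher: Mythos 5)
Your proposal is correct and follows essentially the same route as the paper: the paper's proof also observes that $S^{\emptyset}(\bX\rightarrow Y)=I(\bX;Y)$ and combines this with Eq.~\eqref{eq:fund_calculus} evaluated at the bottom element $\{\emptyset\}$ of $\mathcal{L}^*$. Your additional one-line justifications (that $\{\emptyset\}$ is the global minimum under $\Corder$ and that the constraint $V\independent\bm X^{\emptyset}$ is vacuous, so the supremum is attained by $V=\bX$ via the data processing inequality) simply make explicit what the paper leaves implicit.
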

\begin{proof}
Follows directly from noting that $S^{\emptyset}(\bX\rightarrow Y) =I(\bX;Y)$,
and combining this with Eq.~\eqref{eq:fund_calculus}.
\end{proof}

The next section builds our intuition on this decomposition for small systems.

%%%%%%%%%%%%%%%%%%%%%%%%%%%%%%%%%%%%%%%%%%%%%%%%%%%%%%%%
\subsection{The case $n=2$}\label{sec:decomposition_for_two}

\footnotetext[999]{Note that since $S^{\{12\}} = 0$, we have
$S_\partial^{\{1\}\{2\}} = S^{\{1\}\{2\}}$.}

After having formally presented the decomposition for $n$ variables, let us
focus on the bivariate ($n=2$) case, and develop some intuitions about the
resulting synergy atoms. For two predictors $\bm X=(X_1,X_2)$,
Equation~\eqref{eq:decompppp} yields
\begin{align*}
I(\bm X;Y) =& \: S_\partial^{\{1\}\{2\}}(\bX\rightarrow Y) + S_\partial^{\{1\}}(\bX\rightarrow Y) \\
                  & + S_\partial^{\{2\}}(\bX\rightarrow Y) + S_\partial^{\emptyset}(\bX\rightarrow Y) ~ .
\end{align*}

Above, $S_\partial^{\{1\}\{2\}}(\bX\rightarrow Y)$ can be understood as the information 
about $Y$ that is related to collective properties of $\bm X$ that can be disclosed 
without compromising either $X_1$ or $X_2$~\cite{Note999}. Similarly, $S_\partial^{\{1\}}(\bX\rightarrow Y)$ 
is the information about $Y$ that can be disclosed without revealing parts of $X_1$ but
compromising $X_2$ (otherwise it would have been included in $S_\partial^{\{1\}\{2\}}(\bX\rightarrow Y)$). 
Finally, $S_\partial^{\emptyset}(\bX\rightarrow Y)$ is information about $Y$ that compromises
both variables; put differently, information that is neither in $S^{\{1\}}(\bX\rightarrow Y)$
or $S^{\{2\}}(\bX\rightarrow Y)$. Loosely speaking,
$S_\partial^{\emptyset}$ can be associated with the standard PID redundancy,
$S_\partial^{\{i\}}$ with the unique information, and $S_\partial^{\{1\}\{2\}}$
with the synergy. A detailed comparison of these and the standard PID atoms is
presented in Section~\ref{sec:pid}.

\begin{figure}[t]
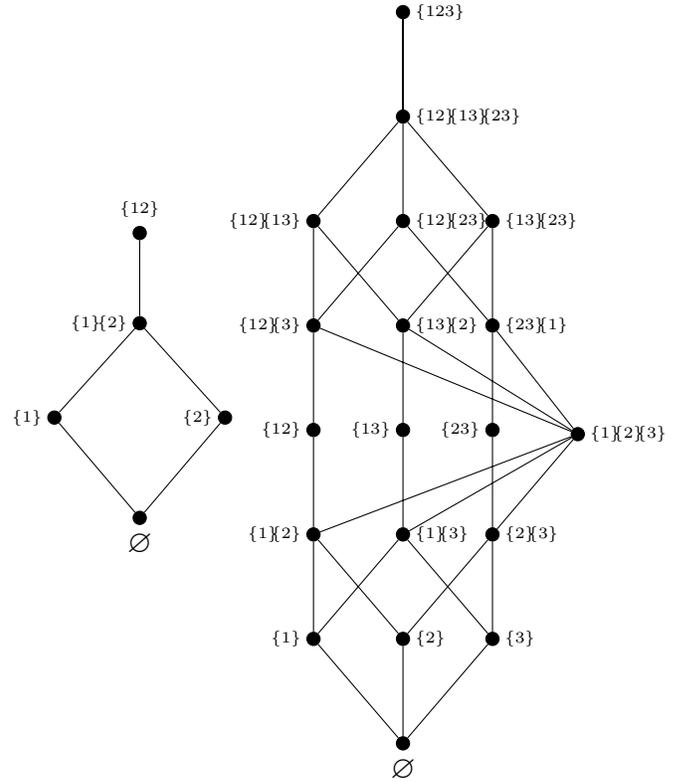

  \centering
  \includetikz{tikz/}{BothLattices}
  \caption{Extended constraint lattice for systems of $n=2$ (\emph{left}) and $n=3$ (\emph{right}) sources.}
  \label{fig:lattice}
\end{figure}

For the particular case where $X_1$ and $X_2$ are binary variables, then the
optimal synergistic channel only depends on their joint distribution -- and not
on the target variable, as shown in Ref.~\cite{rassouli2018latent}. Interestingly,
if $X_1$ and $X_2$ are independent fair coin flips, then~\footnote{Proofs of
this result can be found in Refs.~\cite{rassouli2018latent}
and~\cite{quax2017quantifying}.}
\begin{equation}
S^{\{1\}\{2\}}(\bX \rightarrow Y) = I( X_1\textnormal{\texttt{xor}} X_2 ;Y) ~ .
\end{equation}
This result shows that our definition of synergy effectively captures
high-order statistical effects, which are most purely exhibited by \texttt{XOR}
logic gates~\footnote{For more information on this relationship, please see
Refs.~\cite[Lemma~3]{rosas2019quantifying} and \cite[Section~VI]{rosas2018selforg}.}.
Analytical results for the more general case where $X_1$ and $X_2$ are binary, though not
necessarily independent, are presented in Appendix~\ref{app:bivariate}.

\begin{figure}[t]
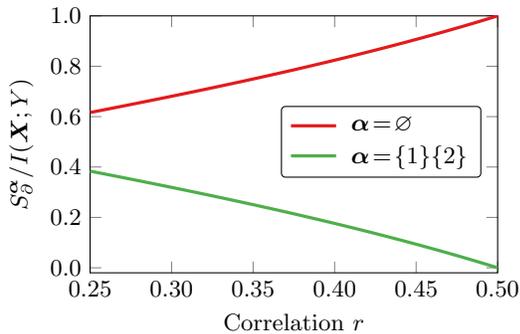

  \centering
  \includetikz{tikz/}{CorrelatedAnd}

  \caption{Normalised atoms of the disclosure decomposition for the
  \texttt{AND} gate with correlated inputs, with $\mathbb{P}\{
X_1=1\}=\mathbb{P}\{X_2=1\} = 0.5$ and correlation $\langle x_1 x_2 \rangle =
r$.}

  \label{fig:correlated_and}
\end{figure}

\begin{table}[h!]
  \centering
  \caption{Common distributions and their $S^{\bm\alpha}$ decomposition}
  \label{tab:dists}
  \ra{1.7}
  \setlength{\tabcolsep}{0.15cm}
  \begin{tabular}{lccccc}
    \toprule[0.35ex]
    ~ &        \texttt{XOR} & \texttt{COPY} & \texttt{Unq.1} & \texttt{AND} & \texttt{TBC} \\ 
    \midrule[0.2ex]                                                           
    \PI{\{1\}\{2\}} & 1     &       0        &       0       &    0.3113    &       1      \\        
    \PI{\{2\}}      & 0     &       0        &       1       &       0      &       0      \\ 
    \PI{\{1\}}      & 0     &       0        &       0       &       0      &       0      \\ 
    \PI{\emptyset}  & 0     &       1        &       0       &      0.5     &       1      \\ 
    \bottomrule[0.35ex]
  \end{tabular}
\end{table}

with these results, it is straightforward to compute the decomposition in
eq.~\eqref{eq:decompppp} for a few illustrative examples; results are presented
in table~\ref{tab:dists}. 
first, we notice that the paradigmatic distributions
\texttt{copy} and \texttt{xor} have the expected \SI{1}{\bit} of redundancy and
synergy, respectively, in agreement 
with our intuition for these cases. similarly, the \texttt{unq.1} distribution 
shows only one non-zero atom, $s_\partial^{\{2\}}$, which corresponds to unique information. 
the index of the atom, however, might seem counterintuitive; the confusion is explained
by the fact that the superscript $\{2\}$ refers to a constraint (the impossibility to
disclose what is in $x_2$), and hence $s_\partial^{\{2\}}$ is more related with the 
contents of $x_1$.
this shows a general theme: that $s^{\bm\alpha}$,
while operationally meaningful and intuitive, needs to be interpreted
differently from other pids (c.f. sec.~\ref{sec:pid}).

As a further example, we compute the disclosure decomposition
$S^{\bm\alpha}_\partial$ for the result of an \texttt{AND} gate with correlated
inputs (Fig.~\ref{fig:correlated_and}). As the inputs become more correlated,
there is less information that can be disclosed without compromising either of
them, and therefore the fraction of the total information that corresponds to
$S^{\emptyset}_\partial$ grows as correlation increases.

%%%%%%%%%%%%%%%%%%%%%%%%%%%%%%%%%%%%%%%%%%%%%%%%%%%%%%%%
%%%%%%%%%%%%%%%%%%%%%%%%%%%%%%%%%%%%%%%%%%%%%%%%%%%%%%%%
\section{The backbone decomposition}\label{sec:backbone}

As the extended constraint lattice $\mathcal{L}^*$ grows extremely rapidly with system
size, it is unfeasible to examine every element of our proposed 
decomposition in all but very small systems. Luckily, the nature of $S^{\bm\alpha}$ 
allows us to formulate a reduced collection of source-sets that form the ``backbone'' of the
constraint lattice, which provides a natural summary of the system's high-order
interactions. 

In the sequel, Subsection~\ref{seq:backbone_lattice} introduces the backbone lattice, then Subsection~\ref{seq:backbone_atoms} discusses the backbone decomposition, and finally Subsection~\ref{sec:examples} illustrates these ideas with some examples.

%%%%%%%%%%%%%%%%%%%%%%%%%%%%%%%%%%%%%%%%%%%%%%%%%%%%%%%%
\subsection{The backbone constraint lattice}\label{seq:backbone_lattice}

We introduce the \emph{backbone constraint lattice}, denoted by $\mathcal{B}
\subset \mathcal{L}^*$, as the sublattice composed by the elements of $\mathcal{A}^*$ 
of the form $\bm\gamma_m = \{ \alpha \subset [n] : |\alpha| = m\}$ for
$m=0,\dots,n$ (the dependency on $n$ is left implicit). Importantly, 
$\Corder$ restricted to $\mathcal{B}$ provides a total order:
\begin{equation}
\bm\gamma_0 \Corder \bm\gamma_1 \dots \Corder \bm\gamma_n~.
\end{equation} 
For example, for the
case of $n=3$ then $\mathcal{B}$ is composed by 
$\bm\gamma_0=\{\emptyset\}$,
$\bm\gamma_1=\{\{1\},\{2\},\{3\}\}$, $\bm\gamma_2=\{\{1,2\},\{2,3\},\{1,3\}\}$,
and $\bm\gamma_3=\{\{1,2,3\}\}$. Hence, the constraint $\gamma_m$ corresponds
to the synergistic channel that discloses no information about any of the
$m$\textsuperscript{th}-order marginals.

For the synergy terms associated with $\mathcal{B}$, we use the shorthand notation $B^m(\bX\rightarrow
Y)\coloneqq S^{\bm\gamma_m}(\bX\rightarrow Y)$. In simple words, $B^m(\bX\rightarrow
Y)$ accounts for the information about $Y$ that can be disclosed
without compromising any group of $m$ variables. Furthermore, as $\bm\gamma_{m-1}
\Corder \bm\gamma_{m}$, the following chain of inequalities is guaranteed:
\begin{equation}\label{eq:diff}
0 = B^{n}(\bX \rightarrow Y) \leq \dots \leq B^0 (\bX \rightarrow Y) = I(\bX; Y).
\end{equation}

%%%%%%%%%%%%%%%%%%%%%%%%%%%%%%%%%%%%%%%%%%%%%%%%%%%%%%%%
\subsection{Backbone atoms}\label{seq:backbone_atoms}

A new application of the M\"obius inversion formula allows us to
define \emph{backbone atoms}, $B_\partial^m(\bX\rightarrow Y)$, which we define
as
\begin{align}
B_\partial^m(\bX\rightarrow Y) 
&\coloneqq B^{m-1}(\bX\rightarrow Y) - \!\!\!\!\! \! \sum_{k=m+1}^n B_\partial^k(\bX\rightarrow Y) \nonumber\\
&= B^{m-1}(\bX\rightarrow Y) - B^{m}(\bX\rightarrow Y) \label{eq:diff_backbone} ~ .
\end{align}
Equivalently, the backbone atoms are the values $B_\partial^k(\bX\rightarrow Y)$ that satisfy, for all $m\in[n]$,
\begin{equation}\label{eq:backbone2}
B^{m-1}(\bX\rightarrow Y) = \sum_{k=m}^n B_\partial^k(\bX\rightarrow Y)~, 
\end{equation}
Intuitively, $B^{m-1}$ corresponds to the amount of information about
$Y$ that $\bX$ can reveal without compromising any group of $m-1$ variables; or, equivalently, information revealed by compromising only groups of $m$ or more variables. 
Consequently, $B_\partial^m$ quantifies the marginal gain of
information that can be disclosed by relaxing the constraints from groups of
$m$ variables to groups of $m-1$. For example, for $m=1$ then
$B^1(\bX\rightarrow Y)$ measures how much information can be disclosed while
keeping each $X_j$ confidential, while $B_\partial^1(\bX\rightarrow Y)$
corresponds to how much is gained when these constraints are relaxed.
Additionally, note that these backbone atoms can be directly related to the
synergy atoms in Eq.~\eqref{eq:syn_atoms}, as
\begin{align}\label{eq:backbone_as_PID}
B_\partial^m(\bX\rightarrow Y) = \sum_{\bm\gamma_{m-1} \Corder \bm\alpha \Corder \bm\gamma_m} S_\partial^{\bm\alpha}(\bX \rightarrow Y)~. 
\end{align}

Puting all these results together one finds a reduced decomposition, which is formalised
by the following result.
\begin{proposition}[Backbone decomposition]
The following decomposition always holds:
\begin{equation}\label{eq:decomposition2}
I(\bX\rightarrow Y) = \sum_{m=1}^{n} B_\partial^m(\bX\rightarrow Y)~.
\end{equation}
Moreover, $B_\partial^m(\bX\rightarrow Y) \geq 0$ for all $m=1,\dots,n$.
\end{proposition}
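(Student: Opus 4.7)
The plan is to observe that the backbone atoms are defined by a telescoping difference, so both claims reduce to elementary consequences of results already established. First I would unwind the recursion. From the second line of Eq.~\eqref{eq:diff_backbone}, each backbone atom satisfies
\begin{equation}
B_\partial^m(\bX\rightarrow Y) = B^{m-1}(\bX\rightarrow Y) - B^{m}(\bX\rightarrow Y),
\nonumber
\end{equation}
so summing over $m=1,\dots,n$ telescopes to $B^0(\bX\rightarrow Y) - B^n(\bX\rightarrow Y)$.

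Next I would identify the two boundary values. By definition $\bm\gamma_0 = \{\emptyset\}$, and, as noted just after Proposition~\ref{prop:decomp}, $S^{\emptyset}(\bX\rightarrow Y) = I(\bX;Y)$, giving $B^0(\bX\rightarrow Y) = I(\bX;Y)$. At the other end, $\bm\gamma_n = \{[n]\}$ forces any admissible channel to satisfy $V \independent \bX$, so $I(V;Y) = 0$ for all such $V$ by the data-processing inequality along the Markov chain $V - \bX - Y$; hence $B^n(\bX\rightarrow Y) = 0$. Both facts are already recorded in the chain of inequalities Eq.~\eqref{eq:diff}. Substituting these into the telescoped sum yields Eq.~\eqref{eq:decomposition2}.

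For non-negativity, I would invoke monotonicity. Since $\bm\gamma_{m-1} \Corder \bm\gamma_{m}$ for every $m \in [n]$ (a $(m-1)$-element subset of $[n]$ is always contained in some $m$-element subset), Lemma~\ref{lemma:monotonic} gives $S^{\bm\gamma_{m-1}}(\bX\rightarrow Y) \geq S^{\bm\gamma_{m}}(\bX\rightarrow Y)$, i.e. $B^{m-1}(\bX\rightarrow Y) \geq B^{m}(\bX\rightarrow Y)$. Therefore $B_\partial^m(\bX\rightarrow Y) \geq 0$ for all $m$.

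There is no real obstacle here: the only point that requires a moment of care is verifying that $\bm\gamma_{m-1} \Corder \bm\gamma_m$ in the partial order of $\mathcal{L}^*$ (which is immediate from the definition of $\Corder$), and checking the two endpoints $B^0$ and $B^n$. Everything else is a one-line telescoping computation relying on Lemma~\ref{lemma:monotonic} and the identification $S^{\emptyset} = I(\bX;Y)$.
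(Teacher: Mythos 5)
Your proof is correct and follows essentially the same route as the paper: the identity is the telescoping of Eq.~\eqref{eq:diff_backbone} (equivalently, Eq.~\eqref{eq:backbone2} at $m=1$) together with the endpoint values $B^0 = I(\bX;Y)$ and $B^n = 0$, and non-negativity comes from the chain of inequalities in Eq.~\eqref{eq:diff}, which rests on $\bm\gamma_{m-1}\Corder\bm\gamma_m$ and Lemma~\ref{lemma:monotonic}. You merely spell out the endpoint and monotonicity checks that the paper records beforehand, which is fine.
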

\begin{proof}
One can obtain Eq.~\eqref{eq:decomposition2} by evaluating Eq. \eqref{eq:backbone2} for $m=1$. 
The non-negativity of the atoms is a consequence of Eqs.~\eqref{eq:diff}
and~\eqref{eq:diff_backbone}.
\end{proof}

These backbone atoms provide a coarse-graining of the full decomposition in
Eq.~\eqref{eq:decompppp}. A basic schematic of this backbone decomposition, as
well as its relationship with the $S_\partial^{\bm\alpha}$ atoms in the
extended constraint lattice are shown in Fig.~\ref{fig:backbone_lat}.
Importantly, note that the cardinality of the backbone lattice grows linearly
with system size, and hence the number of atoms in
Eq.~\eqref{eq:decomposition2} remains tractable for large systems.

%%%%%%%%%%%%%%%%%%%%%%%%%%%%%%%%%%%%%%%%%%%%%%%%%%%%%%%%
\subsection{Examples}
\label{sec:examples}

As an illustrative example of the potential of the backbone decomposition, let
us apply it to scenarios where the relationship between $\bX$ and $Y$ can be
expressed as a Gibbs distribution. In particular, we consider systems of $n+1$
spins (i.e. $\mathcal{X}_i = \{-1,1\}$ for $i=1,\dots,n+1$) whose joint
probability distributions can be expressed in the form
\begin{equation}\label{eq:gibbs}
p_{\bm X^{n+1}}(\bm x^{n+1}) = \frac{ e^{-\beta \mathcal{H}_k(\bm x^{n+1})}}{Z}~,
\end{equation}
where $\beta$ is the inverse temperature, $Z$ a normalisation constant, and
$\mathcal{H}_k(\bm x^n)$ a Hamiltonian function of the form
\begin{align}
\begin{split}
  \mathcal{H}_k(\bm x^{n+1}) = &- \sum_{i=1}^{n+1} J_i x_i - \sum_{i=1}^{n} \sum_{j=i+1}^{n+1} J_{i,j} x_i x_j \\ %
& \ldots - \sum_{|\bs{I}|=k} J_{\bs{\gamma}} \prod_{i\in \bs{I}} x_i~,
\end{split}
\label{eq:hamiltonian}
\end{align}
with the last sum running over all collections of indices $\bs{I}\subseteq
[n+1]$ of cardinality $|\bs{\gamma}| = k$. To calculate all quantities in this
section we consider $Y=X_{n+1}$ as target variable. Full simulation details are
reported in Appendix~\ref{app:simulation}.

\begin{figure}[t]
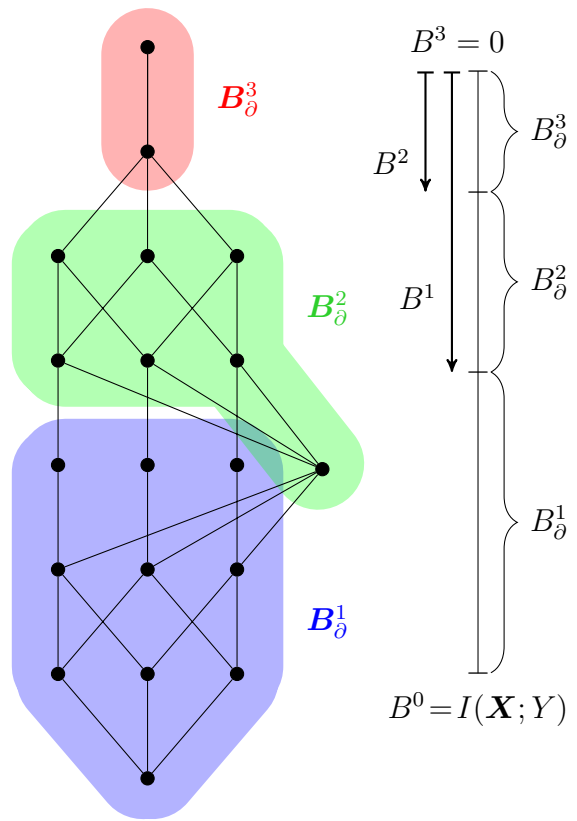

  \centering
  \includetikz{tikz/}{BackboneLattice}

  \caption{Schematic representation of the backbone lattice. (\emph{left})
  Correspondence between backbone atoms and $S_\partial^{\bm\alpha}$ for the
$n=3$ lattice. (\emph{right}) Representation of the backbone lattice as a
totally ordered set.}

  \label{fig:backbone_lat}
\end{figure}

\begin{figure}[h!]
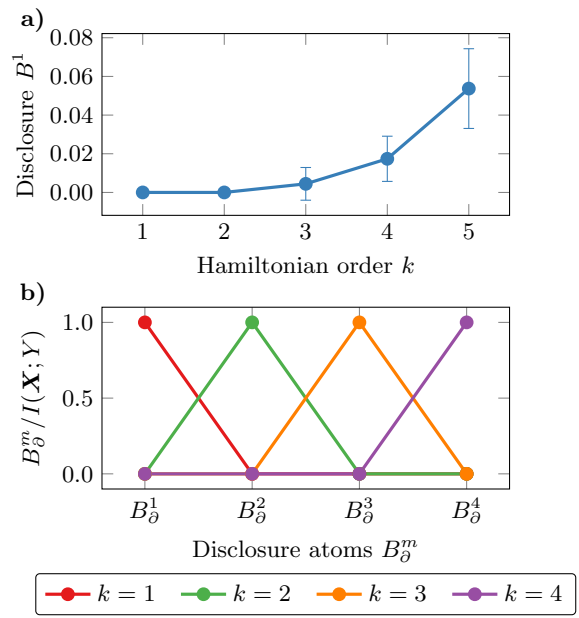

  \centering
  \includetikz{tikz/}{IsingFigure}

  \caption{Synergistic disclosure in Ising models (\textbf{a})
  with terms up to order $k$ and (\textbf{b}) with terms
only of order $k$.}

  \label{fig:ising}
\end{figure}

As a first test case, we consider Hamiltonians with interactions up to order
$k$, as in Eq.~\eqref{eq:hamiltonian} above. For these systems, we calculated
the backbone term $B^1(\bX \rightarrow Y)$, which measures the strength of the
high-order statistical effects beyond pairwise interactions
(Fig.~\figsubref{fig:ising}{a}). As expected, our results show that if the
Hamiltonian only possesses first or second order interactions (i.e. $k=1$ or
$2$) then $B^1(\bX \rightarrow Y)$ is negligible; and for $k \geq 3$, $B^1(\bX
\rightarrow Y)$ grows monotonically with $k$.

As a second test case, we studied Hamiltonians with source-target interactions
only of order $k$, and compute their full backbone decomposition.
Fig.~\figsubref{fig:ising}{b} shows all the backbone atoms $B_\partial^m$ on
the X-axis, normalised by $I(\bX; Y)$. Interestingly, for each Hamiltonian
order $k$ there is only one non-zero backbone atom, which suggests that $I(\bX;
Y) \approx B_\partial^k(\bX \rightarrow Y)$. Note that this relationship
between Hamiltonian interaction order and backbone atom is highly non-trivial,
and finding analytical methods to make this connection more explicit is an open
question.

These findings suggest that the backbone decomposition may provide an analogue
to the measure of \emph{connected information} introduced in Refs.~\cite{amari2001information,schneidman2003network}, 
which captures the effects of Hamiltonian high-order terms over their corresponding Gibbs distributions~\footnote{Note that
the presented findings consider a particular ensemble of distributions. Therefore, more work is needed in order to explore their generality.}. 
The main difference between the connected information and the backbone decomposition is that in the former
all variables play an equivalent role, while in the latter they are divided between sources and target.

%%%%%%%%%%%%%%%%%%%%%%%%%%%%%%%%%%%%%%%%%%%%%%%%%%%%%%%%
%%%%%%%%%%%%%%%%%%%%%%%%%%%%%%%%%%%%%%%%%%%%%%%%%%%%%%%%
\section{Synergistic capacity and private self-disclosure}
\label{sec:selfsyn}

So far, we have investigated the usual information decomposition scenario, in
which a group of source variables \bX hold information about \emph{another},
target variable $Y$. Using the tools developed so far, we can ask a new
question: how much information can \bX disclose \emph{about itself} under
specific constraints? Answering this question will provide further
intuitions on the nature of synergistic disclosure, while revealing some
unexpected properties.

We start by presenting the definition of the \emph{self-disclosure} of a
system, which is a particular case of the formalism presented above.

\begin{definition}
The $\bm\alpha$-self-synergy of \bX is given by $S^{\bm\alpha}(\bX \rightarrow
\bX)$, and denoted simply by $S^{\bm\alpha}(\bX)$.
\end{definition}

This definition makes it straightforward to extend the concepts above to define
self-synergy atoms $S_\partial^{\bm\alpha}(\bX)$, as well as backbone
self-synergy terms and atoms, denoted by $B^{m}(\bX)$ and
$B_\partial^{m}(\bX)$, respectively.

Let us begin with an example, by computing the self-disclosure of binary
bivariate distributions. Consider two binary variables $\bX = (X_1,X_2)$, with
$\mathbb{P}\{ X_1=1\}=\mathbb{P}\{X_2=1\} = p$ and
$\mathbb{P}\{X_1=1,X_2=1\}=r$ (Fig.~\ref{fig:selfdisc}). Perhaps surprisingly,
a direct calculation shows that maximal synergy is achieved for $X_1,X_2$
independent and $p=1/2$ -- which is equivalent to the much-debated Two-Bit-Copy
(TBC) gate commonly discussed in the PID
literature~\cite{griffith2014quantifying,finn2018pointwise,harder2013bivariate}.
To make sense of this result, consider the following bounds on the
self-disclosure:

\begin{lemma}\label{lemma:self_syn_upper_bound}
For any $\bX,Y$ the following bound holds:
\begin{equation}
 H(\bX) - \max_{\alpha_j \in \bm\alpha} H(\bm X^{\alpha_j})  \geq S^{\bm\alpha}(\bX) \geq S^{\bm\alpha}(\bX \rightarrow Y)~.
\end{equation}
\end{lemma}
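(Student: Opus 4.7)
The plan is to prove each inequality separately, treating the right-hand inequality as a consequence of the data-processing inequality applied to synergistic channels, and the left-hand inequality as a direct specialisation of Proposition~\ref{pr:upper_bound}.

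For the lower bound $S^{\bm\alpha}(\bX) \geq S^{\bm\alpha}(\bX \rightarrow Y)$, I would argue as follows. Let $p_{V|\bX}^\star \in \mathcal{C}(\bX; \bm\alpha)$ be any synergistic channel that is feasible for the optimisation defining $S^{\bm\alpha}(\bX\rightarrow Y)$; in particular, the variable $V$ it generates satisfies the Markov chain $V - \bX - Y$. By the data-processing inequality applied along this chain, $I(V; \bX) \geq I(V; Y)$. Since the \emph{same} channel $p_{V|\bX}^\star$ is feasible in the optimisation defining $S^{\bm\alpha}(\bX)$ (the constraint $V - \bX - \bX$ is trivially satisfied), we obtain $S^{\bm\alpha}(\bX) \geq I(V;\bX) \geq I(V;Y)$. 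Taking the supremum over all feasible channels on the right-hand side yields the desired inequality.

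For the upper bound, I would specialise the bound established in Proposition~\ref{pr:upper_bound} to the case of self-synergy, i.e.\ setting the target variable equal to $\bX$. This gives
\begin{equation}
S^{\bm\alpha}(\bX) \leq \min_{j\in\{1,\dots,L\}} I(\bX; \bX^{-\alpha_j} \mid \bX^{\alpha_j}).
\end{equation}
The key simplification is that $\bX^{-\alpha_j}$ is a deterministic function of $\bX$, so $H(\bX^{-\alpha_j}\mid \bX^{\alpha_j}, \bX)=0$, and therefore $I(\bX; \bX^{-\alpha_j} \mid \bX^{\alpha_j}) = H(\bX^{-\alpha_j}\mid \bX^{\alpha_j}) = H(\bX) - H(\bX^{\alpha_j})$ by the chain rule. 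Substituting this into the bound and noting that minimising $H(\bX) - H(\bX^{\alpha_j})$ over $j$ is equivalent to maximising $H(\bX^{\alpha_j})$ over $j$, we obtain
\begin{equation}
S^{\bm\alpha}(\bX) \leq H(\bX) - \max_{\alpha_j \in \bm\alpha} H(\bX^{\alpha_j}).
\end{equation}

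I do not anticipate a serious technical obstacle here: the lemma is essentially a corollary of the already-established upper bound on $S^{\bm\alpha}$ combined with the data-processing inequality, so the only care required is to verify the Markov-chain assumption $V-\bX-Y$ in the definition of $S^{\bm\alpha}(\bX\rightarrow Y)$ is what licenses the data-processing step, and to ensure the deterministic relationship between $\bX$ and its subcollections is correctly exploited to collapse the conditional mutual information into an entropy difference.
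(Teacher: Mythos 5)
Your proof is correct and follows essentially the same route as the paper: the upper bound is Proposition~\ref{pr:upper_bound} applied with $Y=\bX$ (you additionally spell out the collapse of $I(\bX;\bX^{-\alpha_j}\mid\bX^{\alpha_j})$ to $H(\bX)-H(\bX^{\alpha_j})$, which the paper leaves implicit), and your lower-bound argument is just an unrolled version of the paper's Lemma~\ref{lemma:DPI}, using that $\mathcal{C}(\bX;\bm\alpha)$ depends only on $\bX$ together with the standard data-processing inequality along $V-\bX-Y$.
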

\begin{proof}
  The upper bound is proven by an application of Proposition~\ref{pr:upper_bound} with $Y = \bX$, and the lower bound by an application of Lemma~\ref{lemma:DPI}. 
\end{proof}

This lower bound is particularly insightful, as it
suggests that the synergistic self-disclosure of $\bX$ is the \emph{tightest
upper bound on the synergistic information that $\bX$ could hold about any
other target}. Therefore, this (admittedly heterodox) perspective of
synergy provides a clear explanation of why the TBC could have non-zero synergy, 
since it accounts for the ``synergistic capacity'' of its inputs.

\begin{figure}[t]
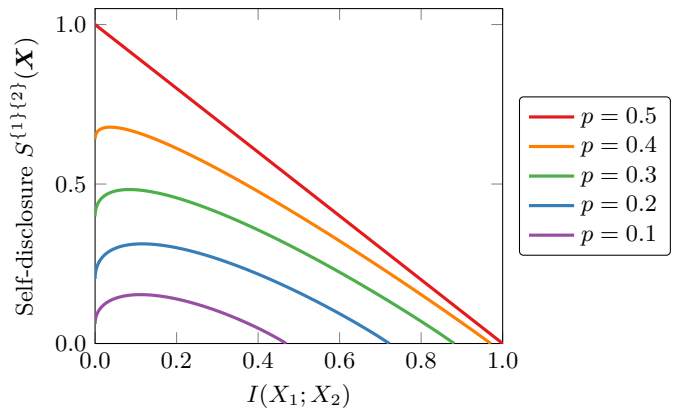

  \centering
  \includetikz{tikz/}{SelfDisclosure}
  \caption{Self-disclosure capacity of two correlated bits.}
  \label{fig:selfdisc}
\end{figure}

Additionally, the upper bound in Lemma~\ref{lemma:self_syn_upper_bound} provides a quick way
to estimate how much synergy can be found with respect to a given set of sources $\bX$. For example,
if $(X_1,X_2)$ are two i.i.d. fair coins, Lemma~\ref{lemma:self_syn_upper_bound} states that their synergy
cannot be larger than 1 bit, which is attained by the optimal self-synergistic channel 
$V^* = X_1 \texttt{xor} X_2$ \footnote{Since our framework provides a algorithm to build the optimal self-synergistic channel for arbitrary sources $\bX$, it would be natural to conjecture that this channel could also be optimal for other target
variables -- i.e., that it could serve as sufficient statistic under the corresponding
constraints. Unfortunately, numerical explorations show that, while this works
for two binary variables (Appendix~\ref{app:bivariate}), it is in general not the case.}.

Another natural conjecture, in the light of the findings reported in
Section~\ref{sec:examples}, would be to argue a relationship between
self-synergy and connected information, as both measures treat
symmetrically all the corresponding variables. However, numerical evaluations
show there is no relationship between them. As a matter of fact, systems with
low degrees of interdependency have high levels of self-synergy, while having
low levels of connected information.

A final lesson that can be learnt from studying self-synergy is that high-order
synergies are not rare corner cases, but are in fact prevalent in the space of
probability distributions. More formally, our next result shows that $B^m(\bX)$
takes most of the information contained in $\bX$ as the system size grows.

\begin{proposition}\label{prop:asynto}
Consider a sequence of random variables $\bX \coloneqq (X_1,\ldots,X_n)$ for which there exists $K\in\mathbb{N}$ such that 
$|\mathcal{X}_k| \leq K$ for all $k\in\mathbb{N}$. If
$\lim_{n\rightarrow\infty} H(\bX)/n$ exists and is not zero, then for any fixed $m\in\mathbb{N}$
\begin{equation}
\lim_{n\rightarrow\infty} \frac{ B^{m}(\bX)}{H(\bX)} = 1 ~ .
\end{equation}
\end{proposition}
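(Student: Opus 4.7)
The plan is to show $\lim_{n\to\infty} B^m(\bX)/H(\bX) = 1$ by establishing matching upper and lower bounds.

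\emph{Upper bound.} Lemma~\ref{lemma:self_syn_upper_bound} applied with $Y = \bX$ gives $B^m(\bX) \leq H(\bX) - \max_{|\alpha|=m} H(\bX^\alpha) \leq H(\bX)$, so $B^m(\bX)/H(\bX) \leq 1$. The cardinality bound implies $\max_{|\alpha|=m} H(\bX^\alpha) \leq m\log K = O(1)$, whereas $H(\bX) = \Theta(n) \to \infty$ under the entropy-density hypothesis. Hence $\limsup_n B^m(\bX)/H(\bX) \leq 1$, with the ratio in fact approaching $1$ from below.

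\emph{Lower bound.} I would exhibit a $\bm\gamma_m$-synergistic observable $V_n$ with $I(V_n;\bX) \geq H(\bX) - o(H(\bX))$, which by definition of $B^m$ gives $\liminf_n B^m(\bX)/H(\bX) \geq 1$. The construction uses linear codes: embed $\mathcal{X}_k \hookrightarrow \mathbb{F}_q$ with $q$ a prime power $\geq K$, regard $\bX$ as a vector in $\mathbb{F}_q^n$, and set $V_n = M_n \bX$ for a matrix $M_n \in \mathbb{F}_q^{k_n \times n}$ whose rows generate a linear code of minimum distance strictly greater than $m$ --- equivalently, every selection of $n-m$ columns of $M_n$ has full rank $k_n$. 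Reed--Solomon codes (for $q \geq n$) or BCH-type constructions (for fixed $q$) give such $M_n$ with $k_n = n - O(\log n)$. In the i.i.d.\ uniform case, the distance property directly ensures $V_n \independent \bX^\alpha$: conditioning on $\bX^\alpha = \bm x_\alpha$ shifts $V_n$ by $M_n^{(\alpha)} \bm x_\alpha$, and since $M_n^{(-\alpha)}$ is surjective onto $\mathbb{F}_q^{k_n}$ and $\bX^{-\alpha}$ is uniform and independent of $\bX^\alpha$, the resulting distribution is uniform on $\mathbb{F}_q^{k_n}$ regardless of $\bm x_\alpha$. This yields $I(V_n;\bX) = H(V_n) = k_n \log q = H(\bX) - O(\log n) = (1-o(1))H(\bX)$.

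\emph{Main obstacle.} The hard step is extending the construction to arbitrary correlated $\bX$. There, $V_n = M_n\bX$ only gives approximate independence from each $\bX^\alpha$, because the conditional distribution of $\bX^{-\alpha}$ given $\bX^\alpha$ need not be uniform. To enforce exact synergy I would add a small independent smoothing noise to $V_n$ and control its entropic cost through a leftover-hash / privacy-amplification argument: since $H_\infty(\bX \mid \bX^\alpha)$ is still essentially $H(\bX) - O(1)$ (linear in $n$) while $H(V_n)$ is much smaller than $2^{H(\bX)}$, the conditional laws of $V_n$ given different values of $\bX^\alpha$ are exponentially close in total variation to the unconditional marginal, so the noise required to make them exactly coincide contributes only $o(H(\bX))$ to $H(V_n \mid \bX)$. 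Combining this lower bound with the trivial upper bound then yields the claimed limit.
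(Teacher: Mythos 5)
Your upper bound is fine, but the lower bound --- which is the entire content of the proposition --- has a genuine gap. Even in your ``nice'' case, the exact-independence argument requires $\bX^{-\alpha}$ to be uniform on $\mathbb{F}_q^{\,n-m}$, which fails as soon as some $\mathcal{X}_k$ is a proper subset of $\mathbb{F}_q$, so the linear-code construction really only covers i.i.d.\ sources uniform on a full field. More seriously, your route to the general correlated case does not work under the stated hypotheses: you assert $H_\infty(\bX\mid \bX^{\alpha}) \approx H(\bX)-O(1)$, but the proposition only assumes that the \emph{Shannon} entropy rate $\lim_n H(\bX)/n$ is positive. A source with a single atom of mass $1/2$ (and the rest spread over exponentially many strings) has min-entropy at most one bit while $H(\bX)$ grows linearly, so no leftover-hash / privacy-amplification argument can be run from the given assumptions; nor is any stationarity or ergodicity assumed that would let you recover a typicality-based substitute. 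Finally, even granting closeness in total variation, the step ``add smoothing noise to make the conditional laws exactly coincide at $o(H(\bX))$ cost'' is asserted rather than proved, and exact-versus-approximate independence gaps of precisely this kind are known to be delicate; since the definition of $\mathcal{C}(\bX;\bm\gamma_m)$ demands exact independence, this step cannot be waved through.

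The paper avoids all of this with a structural argument that is distribution-free: by Corollary~1.2 of Ref.~\cite{Rassouli2019}, the optimal $\bm\gamma_m$-synergistic channel can be taken so that each conditional $p_{\bX|V=v}$ is an extreme point of the polytope cut out by the linear constraints $\mathbf{P}_{\bm\gamma_m}(\mathbf{p}_{\bX}-\mathbf{p}_{\bX|v})=0$, whence its support has size at most $\mathrm{rank}(\mathbf{P}_{\bm\gamma_m})\leq mnK$. This gives $\min_{p_{V|\bX}\in\mathcal{C}(\bX;\bm\gamma_m)} H(\bX|V)\leq\log(mnK)$ and therefore $H(\bX)-\log(mnK)\leq B^m(\bX)\leq H(\bX)$ for \emph{every} joint distribution, no coding construction, uniformity, or min-entropy needed; dividing by $H(\bX)=\Theta(n)$ and letting $n\to\infty$ finishes the proof. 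If you want to salvage your approach, you would need either to restrict the hypotheses (e.g.\ i.i.d.\ uniform sources over a field) or to supply the missing exact-independence argument; as written, the ``main obstacle'' you flag is exactly where the proof fails.
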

\begin{proof}
See Appendix~\ref{app:dominance}.
\end{proof}
Let us work an example to gain intuition on this seemingly counterintuitive
result.
\begin{example}
Consider a system $\bX$ where the components $X_k$ are independent
fair coins. The
mapping $V= (X_1\textnormal{\texttt{xor}} X_2, \dots,
X_{n-1}\textnormal{\texttt{xor}} X_n): \{0,1\}^n\to\{0,1\}^{n-1}$ belongs to $C(\bX, \{\{1\},\dots,\{n\}\})$, and $I(V;\bX) =
n-1$, attaining the upper bound provided in Lemma~\ref{lemma:self_syn_upper_bound}. This implies that $B^1(\bX) =
n-1$. Similarly, one can notice that $V_m = (X_1\textnormal{\texttt{xor}} \dots \textnormal{\texttt{xor}}
X_m, \dots, X_{n-m+1}\textnormal{\texttt{xor}}\dots \textnormal{\texttt{xor}}
X_n): \{0,1\}^n\to\{0,1\}^{n-m+1}$ also attains the bound for the class $C(\bX,\{\{1,\dots,m\},\dots,\{n-m+1,\dots,n\}\})$, and hence
$B^m(\bX) = n-m$. Therefore, $B_\partial^m(\bX) = 1$ for all $m=1,\dots,n-1$, and
\begin{equation}
\lim_{n\rightarrow\infty} \frac{ B^{m}(\bX)}{H(\bX)} = \lim_{n\rightarrow\infty} \frac{ n-m }{n} = 1~.
\end{equation}
\end{example}

The theoretical and practical consequences of the prevalence of synergy will be
discussed in a separate publication.

%%%%%%%%%%%%%%%%%%%%%%%%%%%%%%%%%%%%%%%%%%%%%%%%%%%%%%%%
%%%%%%%%%%%%%%%%%%%%%%%%%%%%%%%%%%%%%%%%%%%%%%%%%%%%%%%%
\section{Relationship with other information decompositions}
\label{sec:pid}

This section explores the relationship of our proposed framework with other information
decompositions. For this, Subsection~\ref{sec:axioms} explores various properties of
our definition of synergy under the light of various axioms typically used in the PID literature,
then Subsection~\ref{sec:relationship_pid} explores relationships of our decomposition with
other PID, and finally Subsection~\ref{sec:numerical} carries out numerical comparisons between
our metrics and other well-known decompositions.

%%%%%%%%%%%%%%%%%%%%%%%%%%%%%%%%%%%%%%%%%%%%%%%%%%%%%%%%
\subsection{Axioms}
\label{sec:axioms}

In previous literature, partial information decomposition is usually discussed
in terms of axioms, which encode various desirable properties that measures
might -- or might not -- satisfy. These axioms are often formulated for
redundancy measures, which, given that the basic constituent of our
decomposition is a synergy measure, makes assessing our framework in these
terms non-trivial. Nevertheless, this subsection explores some of the common
axioms from the point of view of $S^{\bm\alpha}$, using as guideline the set of
axioms discussed in Ref.~\cite{griffith2014intersection}.

The following axioms are satisfied by our measure:
\begin{itemize}
\item \textbf{(GP)} Global positivity: $S^{\bm\alpha}(\bX\rightarrow Y) \geq 0$ for all $\bX,Y$ and $\bm\alpha\in\mathcal{A}^*$.

\item \textbf{(Eq)} Equivalence-class invariance: $S^{\bm\alpha}(\bX\rightarrow Y)$ is invariant under substitution of $X_i$ or $Y$ by an informationally equivalent random variable (i.e. re-labeling).

\item \textbf{(wS)} Weak symmetry: $S^{\bm\alpha}(\bX\rightarrow Y)$ is invariant under reordering of $X_1,\dots,X_n$.

\item \textbf{(wM)} Weak monotonicity: $S^{\bm\alpha}(\bX\rightarrow Y) \leq S^{\bm\alpha}\big((\bX,Z)\rightarrow Y\big)$ (see Appendix~\ref{app:Corder}). Note that this doesn't hold for the backbone terms, as the $\bm\alpha$'s are not equal.

\item \textbf{(CCx)} Channel convexity: 
$S^\mathbf{\alpha}(\mathbf{X}\rightarrow Y)$ is a convex function of $p_{Y|\mathbf X}$ for a given $p_{\bf X}$ (proof in Appendix~\ref{app:further_props}).

\item \textbf{(T-DPI)} Target data processing inequality: if $\bX - Y - Z$ is a Markov chain, then $S^{\bm\alpha}(\bX \rightarrow Y) \geq S^{\bm\alpha}(\bX \rightarrow Z)$ (proof in Appendix~\ref {app:further_props}).

\end{itemize}

The proposed measure does not satisfy strong symmetry (\textbf{sS}), as it
might be the case that $S^{\bm\alpha}\big((X_1,X_2)\rightarrow Y\big) \neq
S^{\bm\alpha}\big((X_1,Y)\rightarrow X_2\big)$~\footnote{As an example of this,
if \unexpanded{$(X_1,X_2)$} are two independent fair coins and
\unexpanded{$Y=(X_1,X_2)$}, then a direct calculation shows that, if
\unexpanded{$\bm\alpha=\{\{1\},\{2\}\}$}, then
\unexpanded{$S^{\bm\alpha}\big((X_1,X_2)\rightarrow Y\big) = 1$} and
\unexpanded{$S^{\bm\alpha}\big((X_1,Y)\rightarrow X_2\big) = 0$}.}.

We can prove by counterexample that $S_\partial^{\bm\alpha}$ does not satisfy
strong local positivity (\textbf{LP}), i.e. that there exist
$S^{\bm\alpha}_\partial(\bX\rightarrow Y) < 0$ for some
$\bm\alpha\in\mathcal{A}^*$~\footnote{Consider a
``double-\texttt{XOR}'' distribution, with 3 independent bits as inputs, and
\unexpanded{$Y = (X_1 \texttt{xor} X_2, X_2 \texttt{xor} X_3)$} as output. For
this distribution, all atoms \unexpanded{$S_\partial^{\{ij\}\{k\}}(\bX
\rightarrow Y) = -1$}, violating (\textbf{LP}). To see why, note that
\unexpanded{$S^{\{12\}\{13\}}(\bX \rightarrow Y) = S^{\{12\}\{3\}}(\bX
\rightarrow Y) = 1$}, since in both cases \unexpanded{$X_2 \texttt{xor} X_3$}
can be disclosed, yielding the second bit of \unexpanded{$Y$}; and
\unexpanded{$S^{\{12\}\{23\}}(\bX \rightarrow Y) = 1$}, since \unexpanded{$X_1
\texttt{xor} X_3$} can be disclosed, yielding the parity of \unexpanded{$Y$}.
Hence, applying the M{\"o}bius inversion, we have
\unexpanded{$S_\partial^{\{12\}\{3\}}(\bX \rightarrow Y) = -1$}.}.
On the other hand, 
note that the backbone atoms $B^m_\partial(\bX\rightarrow Y)$ do satisfy
(\textbf{LP}), as shown in Section~\ref{sec:backbone}.

%%%%%%%%%%%%%%%%%%%%%%%%%%%%%%%%%%%%%%%%%%%%%%%%%%%%%%%%
\subsection{General relationship with PID}
\label{sec:relationship_pid}

In this section we focus on the relationship between our decomposition for the
case of $n=2$ (c.f. Sec.~\ref{sec:decomposition_for_two}), and the standard PID. 
When considering $\bm\alpha, \bm\beta \in \mathcal{A}\coloneqq \mathcal{A}^* / \{\emptyset\}$, the 
classic work of Williams and Beer~\cite{williams2010nonnegative} introduces
the following partial ordering:
\begin{equation}
\bm\alpha \preceq_\text{wb} \bm \beta \iff \forall \beta \in \bm\beta ~ \exists \alpha \in \bm\alpha, \alpha \subseteq \beta~. 
\end{equation}
While the difference between $\preceq_\text{wb}$ and $\preceq_\text{c}$ might
seem subtle, they induce drastically different lattice structures. For example,
if $\bm\alpha=\{\{1\}\}$ and $\bm\beta=\{\{1\}\{2\}\}$, then $\bm\beta
\preceq_\text{wb} \bm\alpha$ while $\bm\alpha \preceq_\text{c} \bm\beta$. The
lattices for $n=2$ for both orderings are shown in
Fig.~\ref{fig:wb_cons_lattices}.

Traditional PID-type decompositions for two sources are based on the following
conditions:
\begin{align*}
I(X_i;Y) &= \texttt{Red}(X_1,X_2\rightarrow Y) + \texttt{Un}(X_i; Y | X_j) \\ 
I(X_i;Y|X_j) &= \texttt{Un}(X_i; Y | X_j)  + \texttt{Syn}(X_1,X_2\rightarrow Y) ~, 
\end{align*}
which are valid for $i,j\in\{1,2\}$ with $i\neq j$. A direct parallel between
these terms and our framework can be made, and is shown in
Table~\ref{tab:correspondence}.

\begin{figure}[ht]
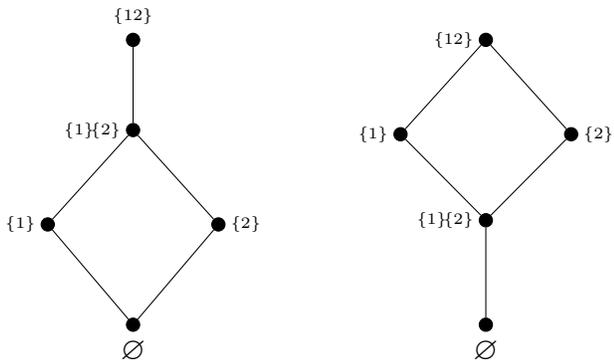

  \centering
  \includetikz{tikz/}{BivariateLattice}
  \caption{Extended constraint (\emph{left}) and redundancy (\emph{right})
  lattices for $n=2$.}
  \label{fig:wb_cons_lattices}
\end{figure}

A key relationship between any PID and our decomposition comes from noticing that,
considering Proposition~\ref{pr:upper_bound} for $\bm X=(X_1,X_2)$ and
$\bm\alpha=\{\{1\}\}$, one finds that
\begin{equation}
S^{\{1\}}(\bm X \rightarrow Y) \leq I(X_2;Y|X_1)~.
\end{equation}
Moreover, numerical evaluations show that this bound is often not attained, as
illustrated by Figure~\ref{fig:CMI} (see Appendix~\ref{app:simulation} for more
details).

\begin{figure}[t]
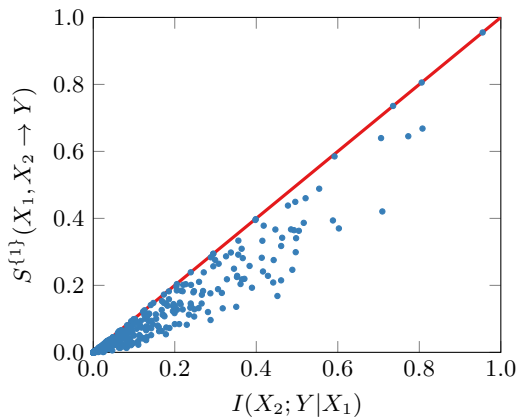

  \centering
  \includetikz{tikz/}{CMIFigure}
  \caption{Conditional information is an upper bound on discloseable information.}
  \label{fig:CMI}
\end{figure}

As a consequence of this, one has that
\begin{align}
I(X_i;Y|X_j) &= \texttt{Un}(X_i; Y | X_j)  + \texttt{Syn}(X_1,X_2\rightarrow Y)  \nonumber \\
&\geq S^{\{j\}}(X_1,X_2\rightarrow Y)  \nonumber\\
&= S_\partial^{\{j\}}(X_1,X_2\rightarrow Y) + S_\partial^{\{1\}\{2\}}(X_1,X_2\rightarrow Y) ~. \nonumber
\end{align}
Conversely, an opposite relationship holds for the marginal mutual information:
\begin{align}
I(X_i;Y) &=\texttt{Red}(X_1,X_2\rightarrow Y) + \texttt{Un}(X_i; Y | X_j)~. \nonumber  \\
&\leq I(X_1,X_2;Y) - S^{\{i\}}(X_1,X_2\rightarrow Y) \nonumber\\
&= S_\partial^{\emptyset}(X_1,X_2\rightarrow Y) + S_\partial^{\{i\}}(X_1,X_2\rightarrow Y)~.\nonumber
\end{align}
By combining these two results, one can compare the co-information with a corresponding
co-information obtained from our decomposition, as follows:
\begin{align}
I(X_1;X_2;Y) &=  \texttt{Red}(X_1,X_2\rightarrow Y) - \texttt{Syn}(X_1,X_2\rightarrow Y) \nonumber \\
&= I(X_i;Y) - I(X_i;Y|X_j) \nonumber \\
& \geq S_\partial^{\emptyset}(\bX \rightarrow Y) - S_\partial^{\{1\}\{2\}}(\bX \rightarrow Y)~.\nonumber
\end{align} 

This result implies that, when assessing the balance between redundancy
and synergy, our decomposition always tends towards redundancy over synergy with
respect to any PID decomposition. In this sense, one can say that -- at least
for $n=2$ -- our decomposition is conservative when attributing dominance of
synergies. The next section provides further evidence to support this claim.

%%%%%%%%%%%%%%%%%%%%%%%%%%%%%%%%%%%%%%%%%%%%%%%%%%%%%%%%
\subsection{Numerical comparisons with other PIDs}
\label{sec:numerical}

Let us now study how our proposed measure of synergy relates to the ones
corresponding to other well-known decompositions. 
Our analysis includes the $I_{\mathrm{BROJA}}$ decomposition by Bertschinger \emph{et
al.}~\cite{bertschinger2014quantifying}, \textit{Common Change in Surprisal}
($I_{\mathrm{ccs}}$) by Ince~\cite{ince2017measuring}, $I_{\mathrm{min}}$ by Williams and
Beer~\cite{williams2010nonnegative}, $I_{\mathrm{dep}}$ by James \emph{et
al.}~\cite{james2018unique}, and the pointwise decomposition by Finn and
Lizier ($I_\pm$)~\cite{finn2018pointwise}; all computed using the \texttt{dit} package~\cite{james2018dit}. To do this comparison, we draw random
distributions from the probability simplex following a NSB prior (see
Appendix~\ref{app:simulation} for details), and then compute their synergy
values with all measures.

\begin{table}[t]
  \centering
  \caption{Correspondence between PID atoms and $S_\partial^{\bm\alpha}$}
  \label{tab:correspondence}
  \ra{1.7}
  \setlength{\tabcolsep}{0.15cm}
  \begin{tabular}{lcc}
    \toprule[0.35ex]
    ~ &       Disclosure decomposition  & PID \\
    \midrule[0.2ex]                                                           % 
 \textit{Synergy} & $S^{\{1\}\{2\}}(\bX \rightarrow Y)$ & $\texttt{Syn}(X_1,X_2 \rightarrow Y)$ \\ 
 \textit{Unique} & $S_\partial^{\{i\}}(\bX \rightarrow Y)$       & $\texttt{Un}(X_j;Y|X_i)$ \\
 \textit{Redundancy} & $S_\partial^{\emptyset}(\bX \rightarrow Y)$ & $\texttt{Red}(X_1,X_2\rightarrow Y)$ \\
    \bottomrule[0.35ex]
  \end{tabular}
\end{table}

\begin{figure}[ht]
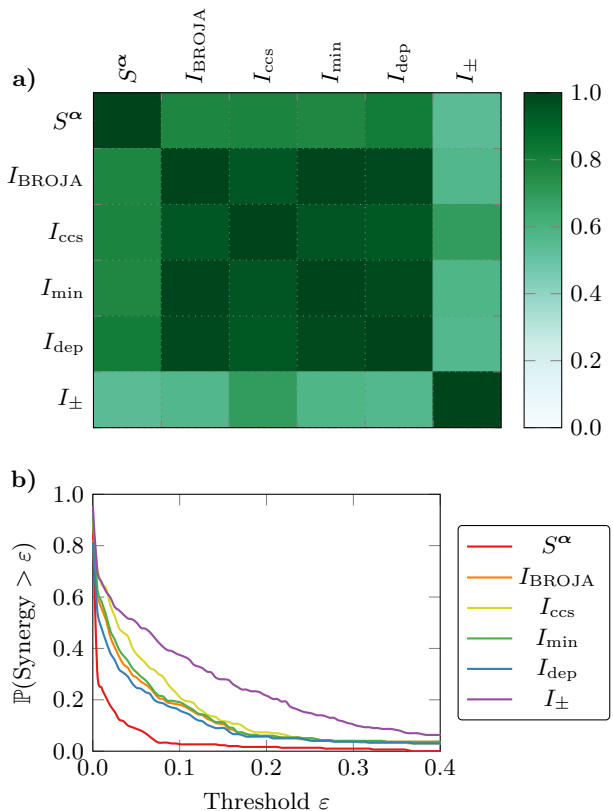

  \centering
  \includetikz{tikz/}{SynergyComparison}
  \caption{Numerical comparison between synergy values according to
  different proposed decompositions. (\textbf{a}) Correlation matrix
  of synergy values of random distributions. (\textbf{b}) Fraction of
  distributions with synergy greater than a given threshold.}
  \label{fig:synergycomp}
\end{figure}

A first, somewhat striking result is the overwhelming correlation found between
most proposed measures -- BROJA, CCS, $I_{\mathrm{min}}$, and $I_{\mathrm{dep}}$ are all related with each
other with correlations greater than 0.94 for every pair
(Fig.~\figsubref{fig:synergycomp}{a}). The two oddballs in this plot are our proposed
measure $S^{\bm\alpha}$
and $I_{\pm}$, which are less well correlated with
the rest \emph{and} with each other (correlations range around 0.70 for
$S^{\gamma_1}$ and around 0.50 for $I_{\pm}$).

To examine this discrepancy, we computed the inverse cumulative function of the
resulting values of the synergy for the various measures
(Fig.~\figsubref{fig:synergycomp}{b}). This curve shows the fraction of all
sampled distributions that have a synergy greater than a given threshold, to
gauge how prevalent synergy is judged to be according to each measure.
Consistent with Fig.~\figsubref{fig:synergycomp}{a},
Fig.~\figsubref{fig:synergycomp}{b} shows that the measures BROJA, CCS, $I_{\mathrm{min}}$, and
$I_{\mathrm{dep}}$ all follow similar profiles. Interestingly, $S^{\gamma_1}$ falls
much faster than the rest, while $I_{\pm}$ does it much more slowly.
Therefore, our measure $S^{\gamma_1}$ can be said to be more ``restrictive,''
in the sense that it tends to assign lower vaules of synergy, while
$I_{\pm}$ is more lenient. We hypothesise this ``overestimation'' of synergy by $I_\pm$
happens because of its tendency to assign negative values to the redundant or
unique information~\footnote{We do not take a stance here with respect to the
non-negativity of information atoms; but since the atoms have to sum to the
same mutual information, negative values in the lower atoms necessarily entail
inflated synergy values.}.

%%%%%%%%%%%%%%%%%%%%%%%%%%%%%%%%%%%%%%%%%%%%%%%%%%%%%%%%
%%%%%%%%%%%%%%%%%%%%%%%%%%%%%%%%%%%%%%%%%%%%%%%%%%%%%%%%
\section{Conclusion}
\label{sec:conclusion}

This paper puts forward an operational definition of informational synergy, and uses it as a foundation to build a multivariate information decomposition. Compared to previous approaches to information decomposition, our framework possesses two key features:

\begin{enumerate}

  \item It is a ``synergy-first'' decomposition, which begins by positing a
  measure of synergy and builds a decomposition after it, as opposed to
  previous approaches that are based on redundancy or unique information, and
  have synergy as a by-product.

  \item It is based on a quantity that is the optimal solution of a well-defined
  problem in the data privacy literature, which makes reasoning about
  the measure more transparent while bringing the decomposition closer to standard
  information-theoretic formulations.

\end{enumerate}

We illustrated the capabilities of the proposed decomposition on various examples, 
and showed that it gives a complementary perspective compared to other information 
decompositions. In particular, our results show that our measure of synergy is in general 
more conservative than other approaches, as it tends to attribute smaller values of synergy. 
We also showed how its operational interpretation provides clear explanations
to open questions in the field of information decomposition, such as the
well-known two-bit-copy problem~\cite{ince2017measuring,finn2018pointwise,kolchinsky2019novel}.

Moreover, our measure has an associated ``backbone'' decomposition, which provides a 
natural coarse-graining of the information atoms. Our results show that in some scenarios the backbone 
atoms provide a directed version of the well-known connected information, which captures
the effect of high-order interaction terms within Gibbs distributions. 
The number of backbone atoms grows linearly with system size, which 
makes this decomposition practical for studying a wide range of systems of interest.

The operational approach taken in this work represents a step towards establishing a solid foundation in the field of
information decomposition. Additionally, we provide an open-source software package~\footnote{A Python implementation of synergistic disclosure and the corresponding decomposition can be found online at \texttt{\url{https://github.com/pmediano/syndisc}}.} implementing the key quantities in this paper,
opening the door for a wide range of applications in data analysis, neuroscience, and information dynamics.

%%%%%%%%%%%%%%%%%%%%%%%%%%%%%%%%%%%%%%%%%%%%%%%%%%%%%%%%
\section*{Acknowledgements}

The authors thank Michael Gastpar and Shamil Chandaria for insightful discussions, and Yike Guo for supporting this research. F.R. is supported by the Ad Astra Chandaria foundation. P.M. is funded by the Wellcome Trust (grant no. 210920/Z/18/Z).

%%%%%%%%%%%%%%%%%%%%%%%%%%%%%%%%%%%%%%%%%%%%%%%%%%%%%%%%
%%%%%%%%%%%%%%%%%%%%%%%%%%%%%%%%%%%%%%%%%%%%%%%%%%%%%%%%

\appendix

%%%%%%%%%%%%%%%%%%%%%%%%%%%%%%%%%%%%%%%%%%%%%%%%%%%%%%%%
\section{Characterising synergistic channels}
\label{app:channel_description}

Here we provide a characterisation of $\mathcal{C}(\bX; \bm\alpha)$ in terms of matricial properties of it constituents. To do this, let us introduce the matrix $\mathbf{P}_{\bm\alpha}$ defined as
\begin{equation}\label{eq:matp}
\mathbf{P}_{\bm\alpha}\triangleq\begin{bmatrix}
\mathbf{P}_{\bm X^{\alpha_1}| \bm X}\\ \vdots \\\mathbf{P}_{\bm X^{\alpha_L}| \bm X}
\end{bmatrix}_{G\times|\hat{\mathcal{X}}|},
\end{equation}
where $G\coloneqq \sum_{k=1}^L \prod_{i\in \alpha_k}|\mathcal{X}_i|$, and
$\hat{\mathcal{X}}$ is the set of tuples $\bm x\in \prod_{k=1}^n \mathcal{X}_k$ such
that $p_{\bX}(\bm x) >0$.  
This matrix is designed such that the matrix product $\mathbf{P}_{\bm\alpha}\mathbf{p}_{\bX}$ 
(with $\mathbf{p}_{\bX}$ being the probability vector of $\bX$) 
yields the marginals within $p_{\bX}$ that need to be ``masked'' by the synergistic
channel -- so that $p_{\bX^{\alpha}|V=v}$ is a uniform distribution for all $\alpha\in\bm\alpha$. 
Note that $\mathbf{P}_{\bm\alpha}$ is a binary matrix, since the $\bm X^{\bm\alpha}$'s are
deterministic functions of $\bm X$. As an example, if $|\mathcal{X}_i|=2, \forall i\in[n]$ and
$\bm\alpha=\{\{1\},\dots,\{n\}\}$, then $\mathbf{P}_{\bm\alpha}$ is a $2n\times
2^n$ matrix that can be built recursively according to
\begin{equation}\label{matrix_structure}
\mathbf{P}_{k+1} = 
    \left[
    \begin{array}{cc}
    1 \dots 1 & 0 \dots 0 \\
    0 \dots 0 & 1 \dots 1 \\
    \mathbf{P}_k & \mathbf{P}_k
    \end{array}
    \right],  
    \nonumber
\end{equation}
with $\mathbf{P}_{\bm\alpha} = \mathbf{P}_n$ and $\mathbf{P}_1 = \Big[
\begin{array}{cc} 1 0 \\ 0 1\end{array}\Big] $. With this matrix, one can characterise the channels in
$\mathcal{C}(\bX; \bm\alpha)$ as shown in the next lemma, this being a straightforward extension of Ref.~\cite[Lemma~1]{rassouli2018latent}.

\begin{lemma}\label{lemma:markov}
$p_{V|\bm X} \in \mathcal{C}(\bm X; \bm\alpha)$ if and only if 
$(\mathbf{p}_{\bX}-\mathbf{p}_{\bX|v})\in\textnormal{Null}(\mathbf{P}_{\bm\alpha}),\forall v\in\mathcal{V}$.
\end{lemma}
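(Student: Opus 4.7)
The plan is to unpack both sides of the claimed equivalence into their underlying probabilistic statements and observe that the matrix $\mathbf{P}_{\bm\alpha}$ is precisely designed to encode the marginalisations that characterise the independence constraints. First, I would recall that $V \independent \bm X^{\alpha_i}$ is equivalent to $p_{\bm X^{\alpha_i}\mid V=v} = p_{\bm X^{\alpha_i}}$ for every $v \in \mathcal{V}$ (restricting to $v$ with $p_V(v) > 0$, which is harmless). Hence $p_{V\mid \bm X} \in \mathcal{C}(\bm X;\bm\alpha)$ iff $p_{\bm X^{\alpha_i}\mid V=v} - p_{\bm X^{\alpha_i}} = 0$ for all $i \in [L]$ and all $v \in \mathcal{V}$.

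Second, I would identify the action of the blocks $\mathbf{P}_{\bm X^{\alpha_i}\mid \bm X}$: since each $\bm X^{\alpha_i}$ is a deterministic function of $\bm X$, the conditional $p_{\bm X^{\alpha_i}\mid \bm X}$ has $\{0,1\}$-entries, and multiplication by it performs the marginalisation that sums out the coordinates not in $\alpha_i$. In particular,
\begin{equation}
\mathbf{P}_{\bm X^{\alpha_i}\mid \bm X}\, \mathbf{p}_{\bX} = \mathbf{p}_{\bm X^{\alpha_i}}, \qquad
\mathbf{P}_{\bm X^{\alpha_i}\mid \bm X}\, \mathbf{p}_{\bX\mid v} = \mathbf{p}_{\bm X^{\alpha_i}\mid v},
\end{equation}
for each $v \in \mathcal{V}$ in the support of $V$. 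Subtracting, the independence condition becomes $\mathbf{P}_{\bm X^{\alpha_i}\mid \bm X}(\mathbf{p}_{\bX} - \mathbf{p}_{\bX\mid v}) = \mathbf{0}$ for every $i$ and $v$.

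Third, I would stack these $L$ vector equations. Because $\mathbf{P}_{\bm\alpha}$ is the vertical concatenation of the blocks $\mathbf{P}_{\bm X^{\alpha_i}\mid \bm X}$ (see Eq.~\eqref{eq:matp}), the joint system is equivalent to the single equation $\mathbf{P}_{\bm\alpha}(\mathbf{p}_{\bX} - \mathbf{p}_{\bX\mid v}) = \mathbf{0}$, i.e.\ $(\mathbf{p}_{\bX} - \mathbf{p}_{\bX\mid v}) \in \mathrm{Null}(\mathbf{P}_{\bm\alpha})$. Both the forward and the reverse implications follow from this chain of equivalences, with no additional argument required for either direction.

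There is no real obstacle here; the main point is that each step is a logical equivalence, so the proof is essentially a bookkeeping exercise translating the probabilistic independence statement into linear algebra via the stacking construction. The mildest subtlety is the restriction to $v$ with $p_V(v)>0$ (so that $\mathbf{p}_{\bX\mid v}$ is well defined); for $v$ outside the support the statement is vacuous and can simply be omitted from $\mathcal{V}$, which is the natural interpretation.
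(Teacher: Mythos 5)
Your proof is correct and follows essentially the same route as the paper's: the paper phrases it as a generic Markov-chain fact (for $X-Y-Z$, independence $X\independent Z$ holds iff $\mathbf{p}_{Y}-\mathbf{p}_{Y|z}\in\mathrm{Null}(\mathbf{P}_{X|Y})$ for all $z$), applied with $X=\bm X^{\alpha_i}$, $Y=\bm X$, $Z=V$, and then intersects the null spaces of the blocks, which is exactly your marginalisation-plus-stacking argument in different words. Your explicit handling of the support of $V$ is a fine (and harmless) refinement of the same bookkeeping.
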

\begin{proof}
\label{app:lemma_markov}

Let $X$, $Y$ and $Z$ be discrete r.v.'s that form a Markov chain as $X-Y-Z$.
Having $X\independent Z$ is equivalent to $ p_{X}(\cdot)=p_{X|Z}(\cdot|z)$,
i.e., $\mathbf{p}_{X}=\mathbf{p}_{X|z},\ \forall z\in\mathcal{Z}$. Furthermore,
due to the Markov chain assumption, we have $\mathbf{p}_{X|z} =
\mathbf{P}_{X|Y}\mathbf{p}_{Y|z},\ \forall z\in\mathcal{Z}$, and in particular,
$\mathbf{p}_{X}=\mathbf{P}_{X|Y}\mathbf{p}_{Y}$. Therefore, having
$\mathbf{p}_{X}=\mathbf{p}_{X|z}$, $\forall z\in\mathcal{Z}$ results in
\begin{equation*}
\mathbf{P}_{X|Y} \left(\mathbf{p}_{Y}- \mathbf{p}_{Y|z} \right)=\mathbf{0},\:\forall z \in\mathcal{Z},
\end{equation*}
or equivalently, $\left(\mathbf{p}_{Y}-\mathbf{p}_{Y|z}\right)\in
\text{Null}(\mathbf{P}_{X|Y})$, $\forall z\in\mathcal{Z}$.

The proof is complete by noting that i) $\bm X^{\alpha_i}- \bm X-Y$ form a
Markov chain for each index $i\in[L]$, and ii)
$\mbox{Null}(\mathbf{P}_{\bm\alpha}) = \bigcap_{i=1}^n \mbox{Null}(\mathbf{P}_{\bm
X^{\alpha_i}|\bm X})$.

\end{proof}

In summary, the matrix form of the (reverse) synergistic channel is related
to $p_{\bX}$ and the null space of $\mathbf{P}_{\bm\alpha}$. They key take-away
from this lemma is that one can compute the conditional distributions
$p_{\bX|v}$ of the synergistic channel by algebraic manipulation of
$\mathbf{P}_{\bm\alpha}$ and $\mathbf{p}_{\bX}$. Furthermore, this lemma has one
important implication: the synergistic channels needed to compute the
synergistic components of $I(\bX; Y)$ with respect to a target variable $Y$ 
depend only on $p_{\bX}$, not on $p_{Y|\bX}$.

%%%%%%%%%%%%%%%%%%%%%%%%%%%%%%%%%%%%%%%%%%%%%%%%%%%%%%%%
\section{Synergy based on $f$-information}
\label{app:finfo}

Let $p$,$q$ be two probability mass functions on $\mathcal{X}$, such that $q(x)>0, \forall x\in\mathcal{X}$. For a convex function $f$ such that $f(1)=0$, the $f$-divergence of $p$ from $q$ is defined as
\begin{equation}\label{fdivergence}
    D_f(p||q)\coloneqq\sum_{x\in\mathcal{X}}f\left(\frac{p(x)}{q(x)}\right)q(x).
\end{equation}
Many well-known divergences are special cases of the $f$-divergence, including the Kullback-Leibler divergence (for $f(p)=p\log p$) and the total variation distance (for $f(p)=|p-1|/2$). 

Using $D_f$, one can define the \textit{f-information} of a pair of discrete random variables $(X,Y)$ as
\begin{equation}\label{finformation}
    I_f(X;Y)\coloneqq D_f(p_{X,Y}||p_X\cdot p_Y),
\end{equation}
which assesses the difference in terms of the $f$-divergence between their joint pmf (i.e. $p_{X,Y}$) and the product of the marginals (i.e., $p_X\cdot p_Y$). As a special case, one can obtain Shannon's mutual information when $f(p)=p\log p$.

Since the main tools used in this paper (such as convexity and data processing inequality) are also valid for the $f$-information, the main results of this paper continue to hold for the more general \textit{$f$-synergy}, defined as
\begin{equation}
S_f^{\bm \alpha}(\bX \rightarrow Y)\coloneqq \sup_{\substack{p_{V|\bm X}\in \mathcal{C}(\bm X;\bm\alpha):\\V-\bm X-Y}} I_f(V;Y)~.
\end{equation} 
That being said, please note that the $f$-information in general does not satisfy a chain rule, unlike mutual information which stems from the logarithmic nature of the KL-divergence. Hence, results that rely on the chain rule (such as Proposition~\ref{pr:upper_bound}) fail to hold in this more general setup.

%%%%%%%%%%%%%%%%%%%%%%%%%%%%%%%%%%%%%%%%%%%%%%%%%%%%%%%%
\section{Proofs of Section~\ref{sec2}}

The following proof is an extension of results presented in Ref.~\cite{rassouli2018latent}.

\begin{proof}[Proof of Proposition~\ref{pr:upper_bound}]
\label{app:upper_bound}
Let $j\in[n]$ be an arbitrary index. First note that,
\begin{align}
I(Y;\bm X |V) &= I(Y;\bm X^{\alpha_j}|V) + I(Y;\bm X^{-\alpha_j}|V, \bm X^{\alpha_j}) \\
&\geq I(Y;\bm X^{\alpha_j}|V) \\
&=  I(Y,V; \bm X^{\alpha_j} )~.
\end{align}
where the second equality is due to the independence between $V$ and $\bm
X^{\alpha_j}$. Then, we find that
\begin{align}
I(Y;V)=&\:I(Y;\bm X)-I(Y;\bm X|V)\label{e1}\\
\leq &\: I(Y; \bm X^{-\alpha_j}|\bm X^{\alpha_j})+I(Y; \bm X^{\alpha_j})-I(Y,V; \bm X^{\alpha_j} )\nonumber\\
=&\: I(Y; \bm X^{-\alpha_j}|\bm X^{\alpha_j}) - I(V;\bm X^{\alpha_j}|Y) \nonumber\\
\leq &\: I(Y; \bm X^{-\alpha_j}|\bm X^{\alpha_j}) ,\label{fe2}
\end{align}
where (\ref{e1}) follows from the Markov chain $Y-\bm X-V$. Since $j$ is chosen
arbitrarily, (\ref{fe2}) holds for all $j\in[n]$, resulting in
(\ref{uppbound}).

The inequalities in the above derivation turn into an equality when
\begin{equation}
I(Y;\bm X^{-\alpha_j}|V, \bm X^{\alpha_j}) = I(V;\bm X^{\alpha_j}|Y) = 0~.
\end{equation}
\end{proof}

%%%%%%%%%%%%%%%%%%%%%%%%%%%%%%%%%%%%%%%%%%%%%%%%%%%%%%%%
\section{Further properties of synergistic disclosure}
\label{app:further_props}

We first consider the convexity of the synergy over channels, i.e.  conditional probabilities relating sources $\bX$ and target $Y$. 
\begin{lemma}[Convexity of $S^{\alpha}$ over target channels] 
$S^\mathbf{\alpha}(\mathbf{X}\rightarrow Y)$ is a convex function of $p_{Y|\mathbf X}$ for a given $p_{\bf X}$.
\end{lemma}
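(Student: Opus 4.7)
The plan is to exploit two structural observations: first, the feasibility set in the definition of $S^{\bm\alpha}$ is invariant under changes in $p_{Y|\bX}$; and second, mutual information between $V$ and $Y$ is convex in $p_{Y|\bX}$ when $p_{V|\bX}$ and $p_{\bX}$ are held fixed. Combining these, the synergy is a pointwise supremum of a family of convex functions, hence convex.

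First I would unpack the definition. Since the Markov chain $V - \bX - Y$ holds and $V$ is constructed via a channel $p_{V|\bX}\in\mathcal{C}(\bX;\bm\alpha)$ that depends only on $p_{\bX}$, the set of admissible channels is the \emph{same} for every choice of $p_{Y|\bX}$. In particular, fixing $p_{\bX}$ and varying only $p_{Y|\bX}$ does not alter $\mathcal{C}(\bX;\bm\alpha)$, and moreover the marginal $p_V(v) = \sum_{\bx} p_{V|\bX}(v|\bx)\, p_{\bX}(\bx)$ is independent of $p_{Y|\bX}$.

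Next, I would fix an arbitrary $p_{V|\bX}\in\mathcal{C}(\bX;\bm\alpha)$ and prove convexity of $p_{Y|\bX}\mapsto I(V;Y)$. Write the joint as
\begin{equation}
p_{V,Y}(v,y) = \sum_{\bx} p_{V|\bX}(v|\bx)\, p_{Y|\bX}(y|\bx)\, p_{\bX}(\bx),
\end{equation}
which is affine in $p_{Y|\bX}$, and similarly $p_Y$ is affine in $p_{Y|\bX}$, while $p_V$ is constant. Thus for any mixture $p_{Y|\bX}^{(\lambda)} = \lambda\, p_{Y|\bX}^{(1)} + (1-\lambda)\, p_{Y|\bX}^{(0)}$, both $p_{V,Y^{(\lambda)}}$ and $p_V\,p_{Y^{(\lambda)}}$ are the same convex combinations of the corresponding distributions at $\lambda=0,1$. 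Applying the joint convexity of the Kullback--Leibler divergence to
\begin{equation}
I(V;Y^{(\lambda)}) = D\bigl(p_{V,Y^{(\lambda)}} \,\big\|\, p_V\, p_{Y^{(\lambda)}}\bigr)
\end{equation}
yields $I(V;Y^{(\lambda)}) \leq \lambda\, I(V;Y^{(1)}) + (1-\lambda)\, I(V;Y^{(0)})$, establishing convexity for each fixed $p_{V|\bX}$.

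Finally, since $\mathcal{C}(\bX;\bm\alpha)$ is independent of $p_{Y|\bX}$, the map $p_{Y|\bX} \mapsto S^{\bm\alpha}(\bX\to Y) = \sup_{p_{V|\bX}\in\mathcal{C}(\bX;\bm\alpha)} I(V;Y)$ is a supremum of a family of convex functions indexed by $p_{V|\bX}$, and is therefore convex. I do not expect a major obstacle here: the only subtlety is ensuring that $p_V$ is independent of $p_{Y|\bX}$ (which is immediate from the Markov constraint), and then applying the standard joint convexity of KL — the rest is bookkeeping.
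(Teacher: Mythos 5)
Your proof is correct and follows essentially the same route as the paper: both arguments rest on the feasible set $\mathcal{C}(\bX;\bm\alpha)$ being independent of $p_{Y|\bX}$ and on the convexity of $I(V;Y)$ in $p_{Y|\bX}$ for a fixed disclosure channel (the paper phrases this by evaluating the maximiser at the mixed channel, you phrase it as a pointwise supremum of convex functions). A minor advantage of your formulation is that it does not require the supremum to be attained, whereas the paper invokes the existence of a maximiser.
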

\begin{proof}
Let us denote the maximiser of $S^\mathbf{\alpha}(\mathbf{X}\rightarrow Y)$ by $p^{\bf{\alpha}*}_{V|\mathbf X}$ (c.f. the corresponding discussion below \eqref{eq:def_syn}), and consider $p_{Y|\mathbf{X}}=\theta p^1_{Y|\mathbf X}+(1-\theta)p^2_{Y|\mathbf X}$ for $\theta\in[0,1]$. From the convexity of mutual information, we have
\begin{equation*}
 S^\mathbf{\alpha}(\mathbf{X}\rightarrow Y)\leq \theta I_1(V;Y) + (1-\theta)I_2(V;Y),   
\end{equation*}
where $I_1$, and $I_2$ are evaluated over %
\begin{align}
    p_1(v,y)&=\sum_{\bf X}p^{\bf{\alpha}*}_{V|\mathbf X}\cdot p_{\bf X}\cdot p^1_{Y|\mathbf X},\text{ and}\\ 
    p_2(v,y)&=\sum_{\bf X}p^{\bf{\alpha}*}_{V|\mathbf X}\cdot p_{\bf X}\cdot p^2_{Y|\mathbf X}, 
\end{align}
respectively. It is also readily verified that
\begin{equation*}
   p_1(v|\mathbf x)=p_2(v|\mathbf x)=p^{\mathbf{\alpha}*}(v|\mathbf x)\in \mathcal{C}(\bm X; \bm \alpha).
\end{equation*}
As a result, one finds that
\begin{equation*}
 S^\mathbf{\alpha}(\mathbf{X}\rightarrow Y)\leq \theta S^\mathbf{\alpha}_1(\mathbf{X}\rightarrow Y) + (1-\theta)S^\mathbf{\alpha}_2(\mathbf{X}\rightarrow Y).   
\end{equation*}
\end{proof}

We now considered an extension of the data processing inequality
of the mutual information to the case of $\bm\alpha$ synergies.

\begin{lemma}[Data processing inequality for $\bm\alpha$-synergy] \label{lemma:DPI}
If $\bX - Y - Z$ is a Markov chain, then
\begin{equation}
S^{\bm\alpha}(\bX \rightarrow Y) \geq S^{\bm\alpha}(\bX \rightarrow Z)~.
\end{equation}
\end{lemma}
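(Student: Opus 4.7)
The plan is to reduce the claim to the standard data processing inequality for Shannon mutual information, by showing that any channel feasible for the right-hand side is also feasible for the left-hand side, and that the extended joint distribution forms a Markov chain.

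First I would let $V^{\star}$ attain the supremum in the definition of $S^{\bm\alpha}(\bX \rightarrow Z)$, whose existence is guaranteed by the footnote after Definition~\ref{def:syndisc}. By assumption, $p_{V^{\star}|\bm X} \in \mathcal{C}(\bm X;\bm\alpha)$ and $V^{\star} - \bm X - Z$ is a Markov chain, so that
\begin{equation*}
S^{\bm\alpha}(\bm X \rightarrow Z) = I(V^{\star}; Z).
\end{equation*}
The key observation is that $p_{V^{\star}|\bm X}$ is itself defined without reference to $Y$ or $Z$, so it remains a valid $\bm\alpha$-synergistic channel when the target is changed to $Y$. In particular, $V^{\star} - \bm X - Y$ is automatically Markov, since the joint distribution factorises as $p(v,\bm x, y, z) = p_{V^{\star}|\bm X}(v|\bm x)\, p_{\bm X}(\bm x)\, p_{Y|\bm X}(y|\bm x)\, p_{Z|Y}(z|y)$ using the assumed chain $\bm X - Y - Z$. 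Hence $p_{V^{\star}|\bm X}$ is feasible in the optimisation defining $S^{\bm\alpha}(\bm X \rightarrow Y)$, which gives
\begin{equation*}
S^{\bm\alpha}(\bm X \rightarrow Y) \geq I(V^{\star}; Y).
\end{equation*}

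The next step is to check that the four random variables form the extended Markov chain $V^{\star} - \bm X - Y - Z$. A short calculation using the factorisation above verifies that $p(y \mid v, \bm x) = p(y \mid \bm x)$ and $p(z \mid v, \bm x, y) = p(z \mid y)$, which is precisely what is required. The standard data processing inequality then gives $I(V^{\star}; Y) \geq I(V^{\star}; Z)$, and chaining everything together yields
\begin{equation*}
S^{\bm\alpha}(\bm X \rightarrow Y) \geq I(V^{\star}; Y) \geq I(V^{\star}; Z) = S^{\bm\alpha}(\bm X \rightarrow Z),
\end{equation*}
as claimed.

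The argument is almost mechanical; the only subtle point, and therefore the step I would take most care with, is the verification that the Markov chain $V^{\star} - \bm X - Y - Z$ actually holds. Since the original chain $\bm X - Y - Z$ is on the \emph{sources-target} side while $V^{\star} - \bm X$ is on the \emph{channel} side, one has to ensure that mixing the two factorisations does not introduce hidden dependencies. This is where the fact that synergistic channels are defined solely from $p_{\bm X}$ (as emphasised after Lemma~\ref{lemma:markov}) is crucial, because it guarantees the factorisation structure used above. With that in place, the proof is just an instance of the classical DPI applied to the appropriate chain.
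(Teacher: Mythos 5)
Your proof is correct and follows essentially the same route as the paper: both arguments rest on the fact that $\mathcal{C}(\bX;\bm\alpha)$ depends only on $p_{\bX}$ and not on the target, together with the classical data processing inequality applied to the chain $V-\bX-Y-Z$. The only cosmetic difference is that you instantiate at the maximiser $V^{\star}$ of the $Z$-problem (and carefully verify the four-variable factorisation), whereas the paper compares the two suprema over the common channel set directly; the substance is identical.
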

\begin{proof}
A direct calculation shows that
\begin{align}
S^{\bm \alpha}(\bX \rightarrow Y) 
&= \sup_{\substack{p_{V|X^n}\in \mathcal{C}(\bm X;\bm\alpha):\\V-\bm X-Y}} I(V;Y) \\
&\geq \sup_{\substack{p_{V|X^n}\in \mathcal{C}(\bm X;\bm\alpha):\\V-\bm X-Z}} I(V;Z) \label{eq:important}\\
&= S^{\bm \beta}(\bX \rightarrow Z)~.
\end{align}
Above, \eqref{eq:important} uses the fact that $\mathcal{C}(\bX;\bm\alpha)$ depends only on $\bX$ and not on
the target variable, and the traditional data processing inequality over the Markov Chain $V-\bX-Y-Z$.
\end{proof}

Finally, the last proposition explored here characterises conditions under which when there is no
$\bm\alpha$-synergy. The proof of this result is omitted, as it is a direct extension of Ref.~\cite[Proposition~1]{rassouli2018latent}
\begin{lemma}\label{prop:no_syn}
$S^{\bm\alpha }(\bm X \rightarrow Y) = 0$ if and only if
$\textnormal{Null}(\mathbf{P}_{\bm\alpha})\not\subset\textnormal{Null}(\mathbf{P}_{Y|\bm
X})$.
\end{lemma}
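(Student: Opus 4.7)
The plan is to invoke Lemma~\ref{lemma:markov}, which characterises $\mathcal{C}(\bX;\bm\alpha)$ via the condition $\mathbf{p}_{\bX} - \mathbf{p}_{\bX|v} \in \textnormal{Null}(\mathbf{P}_{\bm\alpha})$ for every $v$, and combine it with the Markov relation $\mathbf{p}_{Y|v} = \mathbf{P}_{Y|\bX}\,\mathbf{p}_{\bX|v}$ implied by $V-\bX-Y$. I note first that the statement as written contains an apparent typographical inversion: the natural and true content is $S^{\bm\alpha}(\bX\to Y) = 0$ iff $\textnormal{Null}(\mathbf{P}_{\bm\alpha}) \subset \textnormal{Null}(\mathbf{P}_{Y|\bX})$, or equivalently $S^{\bm\alpha}(\bX\to Y) > 0$ iff the containment fails. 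I will sketch the proof of this equivalence, which is what matches the cited Ref.~\cite[Proposition~1]{rassouli2018latent}.

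For the direction ``containment $\Rightarrow$ zero synergy'', I would take any $p_{V|\bX}\in\mathcal{C}(\bX;\bm\alpha)$ and subtract $\mathbf{p}_Y = \mathbf{P}_{Y|\bX}\mathbf{p}_\bX$ from $\mathbf{p}_{Y|v} = \mathbf{P}_{Y|\bX}\mathbf{p}_{\bX|v}$ to obtain $\mathbf{p}_Y - \mathbf{p}_{Y|v} = \mathbf{P}_{Y|\bX}(\mathbf{p}_\bX - \mathbf{p}_{\bX|v})$. By Lemma~\ref{lemma:markov}, $\mathbf{p}_\bX - \mathbf{p}_{\bX|v}$ lies in $\textnormal{Null}(\mathbf{P}_{\bm\alpha})$, and by the containment hypothesis it also lies in $\textnormal{Null}(\mathbf{P}_{Y|\bX})$, so the right-hand side vanishes for every $v$. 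Hence $Y\independent V$, $I(V;Y)=0$, and taking the supremum over $\mathcal{C}(\bX;\bm\alpha)$ gives $S^{\bm\alpha}(\bX\to Y) = 0$.

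For the converse direction, assume there exists $\mathbf{u}\in\textnormal{Null}(\mathbf{P}_{\bm\alpha}) \setminus \textnormal{Null}(\mathbf{P}_{Y|\bX})$. The strategy is to exhibit a binary $V$ whose two induced conditionals differ along $\mathbf{u}$: set $\mathbf{p}_{\bX|v=1} = \mathbf{p}_\bX + \epsilon\mathbf{u}$, $\mathbf{p}_{\bX|v=2} = \mathbf{p}_\bX - \epsilon\mathbf{u}$ with $\mathbb{P}\{V=1\}=\mathbb{P}\{V=2\}=1/2$, and recover $p_{V|\bX}$ by Bayes' rule. Both conditionals then satisfy the null-space criterion of Lemma~\ref{lemma:markov}, so $p_{V|\bX}\in\mathcal{C}(\bX;\bm\alpha)$; and since $\mathbf{P}_{Y|\bX}\mathbf{u}\neq 0$ one has $\mathbf{p}_{Y|v=1}\neq\mathbf{p}_{Y|v=2}$, forcing $I(V;Y)>0$ and therefore $S^{\bm\alpha}(\bX\to Y)>0$.

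The main obstacle is the construction step: $\epsilon>0$ must be chosen small enough that $\mathbf{p}_\bX \pm \epsilon\mathbf{u}$ remain nonnegative and hence valid probability vectors. Feasibility follows because $\mathbf{p}_\bX$ has full support on $\hat{\mathcal{X}}$ (by the construction of $\hat{\mathcal{X}}$ in Appendix~\ref{app:channel_description}), which guarantees a nontrivial interval of admissible $\epsilon$; any $\mathbf{u}\in\textnormal{Null}(\mathbf{P}_{\bm\alpha})$ also has zero entry-sum (since $\mathbf{P}_{\bm\alpha}$ contains rows summing each block of coordinates to marginals), so the perturbed vectors remain normalised. With this handled, both directions reduce to direct applications of Lemma~\ref{lemma:markov}, and the extension from the single-subset case in \cite{rassouli2018latent} to general source-sets $\bm\alpha$ is immediate because the null-space characterisation already accommodates arbitrary collections of subsystems.
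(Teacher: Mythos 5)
Your proof is correct, and your reading of the statement is the right one: as printed, the lemma has the negation misplaced. A quick check: if $Y\independent\bX$ then $S^{\bm\alpha}(\bX\rightarrow Y)=0$, yet $\textnormal{Null}(\mathbf{P}_{\bm\alpha})\subseteq\textnormal{Null}(\mathbf{P}_{Y|\bm X})$ (every null vector of $\mathbf{P}_{\bm\alpha}$ has zero entry-sum, while for an independent target the null space of $\mathbf{P}_{Y|\bm X}$ is exactly the zero-sum vectors), so the intended claim must be that $S^{\bm\alpha}(\bX\rightarrow Y)=0$ iff $\textnormal{Null}(\mathbf{P}_{\bm\alpha})\subseteq\textnormal{Null}(\mathbf{P}_{Y|\bm X})$, equivalently $S^{\bm\alpha}>0$ iff the containment fails --- which is what you prove. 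Note that the paper gives no proof of this lemma, deferring to Ref.~\cite[Proposition~1]{rassouli2018latent}; your argument is essentially that standard one, extended to general source-sets. The forward direction factors $\mathbf{p}_Y-\mathbf{p}_{Y|v}=\mathbf{P}_{Y|\bm X}\left(\mathbf{p}_{\bX}-\mathbf{p}_{\bX|v}\right)$ through Lemma~\ref{lemma:markov}, and the converse uses the binary $\pm\epsilon\mathbf{u}$ perturbation; the two feasibility points you flag (zero entry-sum of $\mathbf{u}$, since the columns of each block of $\mathbf{P}_{\bm\alpha}$ sum to one, and strict positivity of $\mathbf{p}_{\bX}$ on $\hat{\mathcal{X}}$) are precisely what needs checking, together with the observation that $V$ and $Y$ cannot be independent because $\mathbf{P}_{Y|\bm X}\mathbf{u}\neq 0$ and both values of $V$ occur with positive probability. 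I see no gaps.
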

%

%%%%%%%%%%%%%%%%%%%%%%%%%%%%%%%%%%%%%%%%%%%%%%%%%%%%%%%%

\section{Proofs of Section~\ref{sec:decomp}}

\begin{proof}[Proof of Lemma~\ref{lemma:antichains}] \label{app:antichains}

As per Lemma~\ref{lemma:markov}, the synergistic channel of interest depends
only on the null space of $\mathbf{P}_{\bm\alpha}$. Recall that adding a new
source $\alpha'$ to an existing source-set
$\bm\alpha=\{\alpha_1,\dots,\alpha_L\}$ corresponds to appending rows to
$\mathbf{P}_{\bm\alpha}$ (c.f. Eq.~\ref{eq:matp}). If $\alpha' \subset
\alpha_i$, the new rows added to $\mathbf{P}_{\bm\alpha}$ corresponding to
$\alpha'$ are linearly dependent on the existing rows, and therefore the null
space of the matrix (and thus the synergistic channel) remains unchanged.

From this same line of reasoning, the smallest such source-set is that in
which no source is contained in another one -- i.e. an anti-chain.

\end{proof}

\begin{proof}[Proof of Lemma~\ref{lemma:monotonic}] \label{app:Corder}

Consider $\bm\alpha,\bm\beta\in\mathcal{A}^*$ with $\bm\alpha \Corder \bm\beta$. 
Then, it is direct to check that $\mathcal{C}(\bX; \bm\beta) \subseteq \mathcal{C}(\bX; \bm\alpha)$,
and therefore
\begin{align}
S^{\bm \alpha}(\bX \rightarrow Y) 
&= \sup_{\substack{p_{V|X^n}\in \mathcal{C}(\bm X;\bm\alpha):\\V-\bm X-Y}} I(V;Y) \\
&\geq \sup_{\substack{p_{V|X^n}\in \mathcal{C}(\bm X;\bm\beta):\\V-\bm X-Y}} I(V;Y) \\
&=S^{\bm \alpha}(\bX \rightarrow Y)~.
\end{align}
Above, the inequality is because the supremum is taken over a smaller set of parameters.
\end{proof}

Note that the proof of the \emph{weak monotonicity} property (c.f. Section\ref{sec:axioms}) 
follows exactly the same pattern, but leveraging the fact that 
$\mathcal{C}(\bX,\bm\alpha) \subseteq \mathcal{C}\big( (\bX,Z), \bm\alpha\big)$.
The details are left to the interested reader.

%%%%%%%%%%%%%%%%%%%%%%%%%%%%%%%%%%%%%%%%%%%%%%%%%%%%%%%%
%%%%%%%%%%%%%%%%%%%%%%%%%%%%%%%%%%%%%%%%%%%%%%%%%%%%%%%%
\section{Characterisation of synergistic channels in binary bivariate systems}
\label{app:bivariate}

Without loss of generality, let us consider the joint distribution of binary variables $(X_1,X_2)$ described by
\begin{equation}\label{eq:distribution_binary}
p_{X_1,X_2} = 
\begin{bmatrix}
r, a - r, b - r, 1 - a - b + r 
\end{bmatrix}~,
\end{equation}
where $\mathbb{P}\{ X_1=1 \}=a$ and $\mathbb{P}\{ X_2=1 \}=b$ with $a\geq b$
determine the marginal distributions, and $\mathbb{P}\{ X_1=1,X_2=1 \}=r\in
[0,R]$ with $R=\min\{a,b\}$ gives the interdependency (note that $X_1$ and $X_2$
are independent when $r=ab$). 

The optimal $\bm\alpha$-synergistic channel for $\bm\alpha=\{\{1\},\{2\}\}$ for 
this system has been shown to be (see Ref.~\cite{rassouli2018latent})
\begin{equation}
P_{V^*|\bX} = 
\begin{bmatrix}
\frac{r( a - R )}{R(a-r)} & 1 & \frac{r( 1 - a - b + R )}{R(1 - a - b + r)}  & \frac{r( b - R )}{R(b-r)}\\
\frac{a( R - r )}{R(a-r)} & 0 & \frac{r( 1 - a - b )( R - r )}{R(1 - a - b -r)}  & \frac{b( R - r )}{R(b-r)}
\end{bmatrix}~.
\end{equation}
Interestingly, $P_{V^*|\bX}$ depends on the distribution of $\bX$ but not on $Y$. For the particular case
of $a=b=1/2$ and $r=ab=1/4$, $P_{V^*|\bX}$ reduces to an \texttt{XOR}.

%%%%%%%%%%%%%%%%%%%%%%%%%%%%%%%%%%%%%%%%%%%%%%%%%%%%%%%%
\section{Simulation details}
\label{app:simulation}

This Appendix provides simulation details for the numerical results in the paper.

\paragraph*{Simulations in Section~\ref{sec:examples}:}
For Fig.~\ref{fig:ising} we use Gibbs distributions as specified in Eq.~\eqref{eq:gibbs} 
with the Hamiltonian given by Eq.~\eqref{eq:hamiltonian}. 
In contrast, for Fig.~\figsubref{fig:ising}{b} we also considered Gibbs distributions but
this time with Hamiltonians that only have terms of order $k$, i.e.
\begin{align*}
  \mathcal{H}_k(\bm x^n) = - x_{n+1} \sum_{|\bs{\gamma}|=k} J_{\bs{\gamma}} \prod_{i\in \bs{\gamma}} x_i~.
\end{align*}
In all simulations, all interaction coefficients $J$ in the Hamiltonians are drawn 
i.i.d. from a normal distribution with zero mean and standard deviation 0.1. 
In both simulations, 25 Hamiltonians are sampled at
random for each $k$, and the mean and standard deviation of the resulting
quantities ($B^m$ or $B_\partial^m$) are reported in both panes of
Figure~\ref{fig:ising}.

\paragraph*{Simulations in Sections~\ref{sec:relationship_pid} and~\ref{sec:numerical}:} 
For these simulations, each distribution $p_{\bX}$ is sampled from a symmetric Dirichlet
distribution with concentration parameter $\alpha$. Let us define
$\mathrm{Dir}(K, \alpha)$ as the Dirichlet distribution over the $(K-1)$
simplex with all parameters $\alpha_1 = \dots = \alpha_K = \alpha$. 
Sampling from this Dirichlet with a fixed $\alpha$, however, has the
undesirable effect of generating distributions with a very narrow distribution
of entropy $H(\bX)$~\cite{nemenman2002entropy}. To generate distributions 
with a near-uniform of entropy, we sample $\alpha$ from a Nemenman-Shafee-Bialek (NSB)
prior~\cite{archer2013bayesian}
\begin{align*}
  p(\alpha) \propto K \psi(K \alpha + 1) + \psi(\alpha + 1) ~ ,
\end{align*}
for a distribution over an alphabet of size $K$. For simulations of $n$ binary
variables, we set $K = 2^n$, sample $\alpha$ using the equation above, and
then sample $p_{\bX}$ from a symmetric Dirichlet using standard algorithms.

\vspace{10pt}
%%%%%%%%%%%%%%%%%%%%%%%%%%%%%%%%%%%%%%%%%%%%%%%%%%%%%%%%
\section{Asymptotic limits of self-disclosure}
\label{app:dominance}
\begin{proof}[Proof of Proposition~\ref{prop:asynto}]
From Corollary 1.2 of Ref.~\cite{Rassouli2019} one can see that
\begin{equation}
\min_{p_{V|\bX}\in \mathcal{C}(\bX;\bm\gamma_m)} H(\bX|V) \leq \log\big( \text{rank}(\mathbf{P}_{\bm\gamma_m}) \big)~.
\end{equation}
As the rank of a matrix cannot be larger than its number of rows, for a given
$m\in \{1,\dots,n-1\}$ is clear that
\begin{equation}
\text{rank}( \mathbf{P}_{\bm\gamma_m} ) \leq m \sum_{k=1}^n |\mathcal{X}_k| \leq m n K~,
\end{equation}
and therefore $\min_{p_{V|\bX}\in \mathcal{C}(\bX;\bm\gamma_m)} H(\bX|V) \leq \log(mnK)$. By
definition of $B^m(\bX)$, this implies that
\begin{equation}
H(\bX) - \log(mnK) \leq B^m(\bX) \leq H(\bX)~.
\end{equation}
Taking the limit of $n\rightarrow\infty$ for $m$ fixed gives that
\begin{equation}
\lim_{n\rightarrow\infty}  \frac{1}{n} B^m(\bX) = \lim_{n\rightarrow\infty} \frac{1}{n} H(\bX)~,
\end{equation}
which is equivalent to what we want to prove (note that the cardinality of $\bX$ 
grows with $n$ as well).
\end{proof}

\bibliographystyle{IEEEtran}
\bibliography{bib}

% Generated by IEEEtran.bst, version: 1.14 (2015/08/26)
\begin{thebibliography}{10}
\providecommand{\url}[1]{#1}
\csname url@samestyle\endcsname
\providecommand{\newblock}{\relax}
\providecommand{\bibinfo}[2]{#2}
\providecommand{\BIBentrySTDinterwordspacing}{\spaceskip=0pt\relax}
\providecommand{\BIBentryALTinterwordstretchfactor}{4}
\providecommand{\BIBentryALTinterwordspacing}{\spaceskip=\fontdimen2\font plus
\BIBentryALTinterwordstretchfactor\fontdimen3\font minus
  \fontdimen4\font\relax}
\providecommand{\BIBforeignlanguage}[2]{{%
\expandafter\ifx\csname l@#1\endcsname\relax
\typeout{** WARNING: IEEEtran.bst: No hyphenation pattern has been}%
\typeout{** loaded for the language `#1'. Using the pattern for}%
\typeout{** the default language instead.}%
\else
\language=\csname l@#1\endcsname
\fi
#2}}
\providecommand{\BIBdecl}{\relax}
\BIBdecl

\bibitem{ganmor2011sparse}
E.~Ganmor, R.~Segev, and E.~Schneidman, ``Sparse low-order interaction network
  underlies a highly correlated and learnable neural population code,''
  \emph{Proceedings of the National Academy of Sciences}, vol. 108, no.~23, pp.
  9679--9684, 2011.

\bibitem{wibral2017quantifying}
M.~Wibral, C.~Finn, P.~Wollstadt, J.~T. Lizier, and V.~Priesemann,
  ``Quantifying information modification in developing neural networks via
  partial information decomposition,'' \emph{Entropy}, vol.~19, no.~9, 2017.

\bibitem{tax2017partial}
T.~M. Tax, P.~A.~M. Mediano, and M.~Shanahan, ``The partial information
  decomposition of generative neural network models,'' \emph{Entropy}, vol.~19,
  no.~9, 2017.

\bibitem{rosas2018selforg}
F.~Rosas, P.~A. Mediano, M.~Ugarte, and H.~J. Jensen, ``An
  information-theoretic approach to self-organisation: Emergence of complex
  interdependencies in coupled dynamical systems,'' \emph{Entropy}, vol.~20,
  no.~10, 2018.

\bibitem{rosas2019quantifying}
F.~E. Rosas, P.~A.~M. Mediano, M.~Gastpar, and H.~J. Jensen, ``Quantifying
  high-order interdependencies via multivariate extensions of the mutual
  information,'' \emph{Physical Review E}, vol. 100, p. 032305, Sep 2019.

\bibitem{waldrop1993complexity}
M.~M. Waldrop, \emph{Complexity: The Emerging Science at the Edge of Order and
  Chaos}.\hskip 1em plus 0.5em minus 0.4em\relax Simon and Schuster, 1993.

\bibitem{chechik2002group}
G.~Chechik, A.~Globerson, M.~J. Anderson, E.~D. Young, I.~Nelken, and
  N.~Tishby, ``Group redundancy measures reveal redundancy reduction in the
  auditory pathway,'' in \emph{Advances in Neural Information Processing
  Systems}, 2002, pp. 173--180.

\bibitem{varadan2006computational}
V.~Varadan, D.~M. Miller~III, and D.~Anastassiou, ``Computational inference of
  the molecular logic for synaptic connectivity in {C. elegans},''
  \emph{Bioinformatics}, vol.~22, no.~14, pp. e497--e506, 2006.

\bibitem{barrett2015exploration}
A.~B. Barrett, ``Exploration of synergistic and redundant information sharing
  in static and dynamical gaussian systems,'' \emph{Physical Review E},
  vol.~91, p. 052802, May 2015.

\bibitem{amari2001information}
S.-I. Amari, ``Information geometry on hierarchy of probability
  distributions,'' \emph{IEEE Transactions on Information Theory}, vol.~47,
  no.~5, pp. 1701--1711, 2001.

\bibitem{schneidman2003network}
E.~Schneidman, S.~Still, M.~J. Berry, and W.~Bialek, ``Network information and
  connected correlations,'' \emph{Physical Review Letters}, vol.~91, no.~23, p.
  238701, 2003.

\bibitem{williams2010nonnegative}
P.~L. Williams and R.~D. Beer, ``Nonnegative decomposition of multivariate
  information,'' \emph{arXiv preprint arXiv:1004.2515}, 2010.

\bibitem{Note1}
In effect, PID merely states formal relationships between its atoms and various
  Shannon's mutual information terms.

\bibitem{ince2017measuring}
R.~A. Ince, ``Measuring multivariate redundant information with pointwise
  common change in surprisal,'' \emph{Entropy}, vol.~19, no.~7, p. 318, 2017.

\bibitem{bertschinger2014quantifying}
N.~Bertschinger, J.~Rauh, E.~Olbrich, J.~Jost, and N.~Ay, ``Quantifying unique
  information,'' \emph{Entropy}, vol.~16, no.~4, pp. 2161--2183, 2014.

\bibitem{finn2018pointwise}
C.~Finn and J.~T. Lizier, ``Pointwise partial information decomposition using
  the specificity and ambiguity lattices,'' \emph{Entropy}, vol.~20, no.~4, p.
  297, 2018.

\bibitem{james2019unique}
R.~James, J.~Emenheiser, and J.~Crutchfield, ``Unique information and secret
  key agreement,'' \emph{Entropy}, vol.~21, no.~1, p.~12, 2019.

\bibitem{thurner2018introduction}
S.~Thurner, R.~Hanel, and P.~Klimek, \emph{Introduction to the Theory of
  Complex Systems}.\hskip 1em plus 0.5em minus 0.4em\relax Oxford University
  Press, 2018.

\bibitem{rauh2014reconsidering}
J.~Rauh, N.~Bertschinger, E.~Olbrich, and J.~Jost, ``Reconsidering unique
  information: Towards a multivariate information decomposition,'' in
  \emph{2014 IEEE International Symposium on Information Theory}.\hskip 1em
  plus 0.5em minus 0.4em\relax IEEE, 2014, pp. 2232--2236.

\bibitem{kolchinsky2019novel}
A.~Kolchinsky, ``A novel approach to multivariate redundancy and synergy,''
  \emph{arXiv preprint arXiv:1908.08642}, 2019.

\bibitem{feldman1998measures}
D.~P. Feldman and J.~P. Crutchfield, ``Measures of statistical complexity:
  Why?'' \emph{Physics Letters A}, vol. 238, no. 4-5, pp. 244--252, 1998.

\bibitem{banerjee2018computing}
P.~K. Banerjee, J.~Rauh, and G.~Mont{\'u}far, ``Computing the unique
  information,'' in \emph{2018 IEEE International Symposium on Information
  Theory (ISIT)}.\hskip 1em plus 0.5em minus 0.4em\relax IEEE, 2018, pp.
  141--145.

\bibitem{Note2}
The direct implications of these approaches are about the unique information,
  and apply to the synergy only via additional equalities with mutual
  information terms.

\bibitem{rassouli2018latent}
B.~Rassouli, F.~Rosas, and D.~G{\"u}nd{\"u}z, ``Latent feature disclosure under
  perfect sample privacy,'' in \emph{2018 IEEE International Workshop on
  Information Forensics and Security (WIFS)}.\hskip 1em plus 0.5em minus
  0.4em\relax IEEE, 2018, pp. 1--7.

\bibitem{Rassouli2019}
B.~{Rassouli}, F.~E. {Rosas}, and D.~G{\"u}nd{\"u}z, ``Data disclosure under
  perfect sample privacy,'' \emph{IEEE Transactions on Information Forensics
  and Security}, pp. 1--1, 2019.

\bibitem{quax2017quantifying}
R.~Quax, O.~Har-Shemesh, and P.~Sloot, ``Quantifying synergistic information
  using intermediate stochastic variables,'' \emph{Entropy}, vol.~19, no.~2,
  p.~85, 2017.

\bibitem{Note3}
To verify this, first one shows that it suffices to have a random variable $V$
  with a finite alphabet by means of cardinality bounding techniques. Then,
  using the fact that any finite probability simplex is a compact set, the
  supremum in (\ref {eq:def_syn}) has to be attained due to the continuity of
  the mutual information.

\bibitem{Note4}
Shannon himself employed the known formula for entropy because it had a strict
  operational meaning as minimum description length, not because it was the
  result of the four celebrated axioms.

\bibitem{Note5}
In fact, we employ the notation $\protect \mathcal {A}^*$ to differentiate this
  set from with their definition of antichains set $\protect \mathcal {A}$ that
  doesn't include the empty set -- which plays an important role in our
  framework.

\bibitem{james2018unique}
R.~James, J.~Emenheiser, and J.~Crutchfield, ``Unique information via
  dependency constraints,'' \emph{Journal of Physics A: Mathematical and
  Theoretical}, 2018.

\bibitem{ay2019information}
N.~Ay, D.~Polani, and N.~Virgo, ``Information decomposition based on
  cooperative game theory,'' \emph{arXiv preprint arXiv:1910.05979}, 2019.

\bibitem{charalambides2018enumerative}
C.~A. Charalambides, \emph{Enumerative Combinatorics}.\hskip 1em plus 0.5em
  minus 0.4em\relax Chapman and Hall/CRC, 2018.

\bibitem{Note6}
Strictly speaking, since $\protect \bm {\alpha }$ is a set of sets, these atoms
  should be denoted by $S^{\protect \{\emptyset \protect \}}_\partial $,
  $S^{\protect \{\protect \{1\protect \}\protect \{2\protect \}\protect
  \}}_\partial $, etc. For clarity, and for consistency with prior literature,
  we omit the outer bracket and denote these symbols by shortened expressions
  (e.g. $S^{\emptyset }_\partial $, $S^{\protect \{1\protect \}\protect
  \{2\protect \}}_\partial $, etc.).

\bibitem{Note999}
Note that since $S^{\protect \{12\protect \}} = 0$, we have $S_\partial
  ^{\protect \{1\protect \}\protect \{2\protect \}} = S^{\protect \{1\protect
  \}\protect \{2\protect \}}$.

\bibitem{Note7}
Proofs of this result can be found in Refs.~\cite {rassouli2018latent}
  and~\cite {quax2017quantifying}.

\bibitem{Note8}
For more information on this relationship, please see Refs.~\cite
  [Lemma~3]{rosas2019quantifying} and \cite [Section~VI]{rosas2018selforg}.

\bibitem{Note9}
Note that the presented findings consider a particular ensemble of
  distributions. Therefore, more work is needed in order to explore their
  generality.

\bibitem{griffith2014quantifying}
V.~Griffith and C.~Koch, ``Quantifying synergistic mutual information,'' in
  \emph{Guided Self-Organization: Inception}.\hskip 1em plus 0.5em minus
  0.4em\relax Springer, 2014, pp. 159--190.

\bibitem{harder2013bivariate}
M.~Harder, C.~Salge, and D.~Polani, ``Bivariate measure of redundant
  information,'' \emph{Physical Review E}, vol.~87, no.~1, p. 012130, 2013.

\bibitem{Note10}
Since our framework provides a algorithm to build the optimal self-synergistic
  channel for arbitrary sources $\protect \ensuremath {\protect \bm
  {X}}\protect \xspace $, it would be natural to conjecture that this channel
  could also be optimal for other target variables -- i.e., that it could serve
  as sufficient statistic under the corresponding constraints. Unfortunately,
  numerical explorations show that, while this works for two binary variables
  (Appendix~\ref {app:bivariate}), it is in general not the case.

\bibitem{griffith2014intersection}
V.~Griffith, E.~Chong, R.~James, C.~Ellison, and J.~Crutchfield, ``Intersection
  information based on common randomness,'' \emph{Entropy}, vol.~16, no.~4, pp.
  1985--2000, 2014.

\bibitem{Note11}
As an example of this, if $(X_1,X_2)$ are two independent fair coins and
  $Y=(X_1,X_2)$, then a direct calculation shows that, if $\bm \alpha
  =\{\{1\},\{2\}\}$, then $S^{\bm \alpha }\big ((X_1,X_2)\rightarrow Y\big ) =
  1$ and $S^{\bm \alpha }\big ((X_1,Y)\rightarrow X_2\big ) = 0$.

\bibitem{Note12}
Consider a ``double-\protect \texttt {XOR}'' distribution, with 3 independent
  bits as inputs, and $Y = (X_1 \texttt {xor} X_2, X_2 \texttt {xor} X_3)$ as
  output. For this distribution, all atoms $S_\partial ^{\{ij\}\{k\}}(\bX
  \rightarrow Y) = -1$, violating (\protect \textbf {LP}). To see why, note
  that $S^{\{12\}\{13\}}(\bX \rightarrow Y) = S^{\{12\}\{3\}}(\bX \rightarrow
  Y) = 1$, since in both cases $X_2 \texttt {xor} X_3$ can be disclosed,
  yielding the second bit of $Y$; and $S^{\{12\}\{23\}}(\bX \rightarrow Y) =
  1$, since $X_1 \texttt {xor} X_3$ can be disclosed, yielding the parity of
  $Y$. Hence, applying the M{\"o}bius inversion, we have $S_\partial
  ^{\{12\}\{3\}}(\bX \rightarrow Y) = -1$.

\bibitem{james2018dit}
R.~G. James, C.~J. Ellison, and J.~P. Crutchfield, ``{dit}: a {P}ython package
  for discrete information theory,'' \emph{The Journal of Open Source
  Software}, vol.~3, no.~25, p. 738, 2018.

\bibitem{Note13}
We do not take a stance here with respect to the non-negativity of information
  atoms; but since the atoms have to sum to the same mutual information,
  negative values in the lower atoms necessarily entail inflated synergy
  values.

\bibitem{Note14}
A Python implementation of synergistic disclosure and the corresponding
  decomposition can be found online at \protect \texttt {\protect \url
  {https://github.com/pmediano/syndisc}}.

\bibitem{nemenman2002entropy}
I.~Nemenman, F.~Shafee, and W.~Bialek, ``Entropy and inference, revisited,'' in
  \emph{Advances in Neural Information Processing Systems}, 2002, pp. 471--478.

\bibitem{archer2013bayesian}
E.~Archer, I.~Park, and J.~Pillow, ``Bayesian and quasi-{Bayesian} estimators
  for mutual information from discrete data,'' \emph{Entropy}, vol.~15, no.~5,
  pp. 1738--1755, 2013.

\end{thebibliography}

\end{document}